\def\ps@pprintTitle{%
 \let\@oddhead\@empty
 \let\@evenhead\@empty
 \def\@oddfoot{}%
 \let\@evenfoot\@oddfoot}
\journal{Journal of \LaTeX\ Templates}
\newtheorem{theorem}{Theorem}
\newtheorem{lemma}[theorem]{Lemma}
\newtheorem{property}{Property}
\newcommand{\fpp}[1]{$#1$-PPT}
\newcommand{\fppExt}[1]{precedence proper $#1$-thin}
\newcommand{\fppPar}[1]{$pre$-$pthin(#1)$}
\newcommand{\fp}[1]{$#1$-PT}
\newcommand{\fpExt}[1]{precedence $#1$-thin}
\newcommand{\fpPar}[1]{$pre$-$thin(#1)$}
\newcommand{\NP}{\textsf{NP}}
\DeclareMathOperator{\thin}{thin}
\DeclareMathOperator{\pthin}{pthin}
\newcommand{\mC}{{\mathcal C}}
\begin{document}

\begin{frontmatter}

\title{Precedence thinness in graphs}





\author[UBA]{Flavia~Bonomo-Braberman}
\ead{fbonomo@dc.uba.ar}

\author[UERJ]{Fabiano~S.~Oliveira}
\ead{fabiano.oliveira@ime.uerj.br}

\author[UFRJ]{Moys\'es~S.~Sampaio~Jr.\corref{mycorrespondingauthor}}
\cortext[mycorrespondingauthor]{Corresponding author}

\ead{moysessj@cos.ufrj.br}

\author[UERJ,UFRJ]{Jayme~L.~Szwarcfiter}
\ead{jayme@nce.ufrj.br}

\address[UBA]{Universidad de Buenos Aires. Facultad de Ciencias Exactas y Naturales. Departamento de Computaci\'on. Buenos Aires,
Argentina. / CONICET-Universidad de Buenos Aires. Instituto de
Investigaci\'on en Ciencias de la Computaci\'on (ICC). Buenos
Aires, Argentina.}
\address[UERJ]{Universidade do Estado do Rio de Janeiro, Brazil}
\address[UFRJ]{Universidade Federal do Rio de Janeiro, Brazil}


\begin{abstract}
Interval and proper interval graphs are very well-known graph classes, for which there is a wide literature. As a consequence, some generalizations of interval graphs have been proposed, in which graphs in general are expressed in terms of $k$ interval graphs, by splitting the graph in some special way.

As a recent example of such an approach, the classes of $k$-thin  and proper $k$-thin graphs have been introduced generalizing interval and proper interval graphs, respectively. 
The complexity of the recognition of each of these classes is still open, even for fixed $k \geq 2$. 

In this work, we introduce a subclass of  $k$-thin graphs (resp. proper $k$-thin graphs), called \fpExt{k} graphs (resp. \fppExt{k} graphs).  
Concerning partitioned \fpExt{k} graphs, we present a polynomial time recognition algorithm based on $PQ$ trees. With respect to partitioned \fppExt{k} graphs, we prove that the related recognition problem is \NP-complete for an arbitrary $k$ and polynomial-time solvable when $k$ is fixed. Moreover, we present a characterization for these classes based on threshold graphs.
\end{abstract}

\begin{keyword}
(proper) $k$-thin graphs, precedence (proper) $k$-thin graphs, recognition algorithm, characterization, threshold graphs.
\end{keyword}

\end{frontmatter}


\section{Introduction}\label{chpt:introduction}

The class of $k$-thin graphs has recently been introduced by Mannino, Oriolo, Ricci and Chandran in~\cite{Carl07} as a generalization of interval graphs. Motivated by this work, Bonomo and de Estrada~\cite{Bono18} defined the class of proper $k$-thin graphs, which generalizes proper interval graphs. A \emph{$k$-thin graph} $G$ is a graph for which there is a $k$-partition $(V_1,V_2 \ldots, V_k)$, and an ordering $s$ of $V(G)$ such that, for any triple $(p,q,r)$ of $V(G)$ ordered according to $s$, if $p$ and $q$ are in a same part $V_i$ and $(p,r) \in E(G)$, then $(q,r)\in E(G)$. Such an ordering and partition are said to be \emph{consistent}.
A graph $G$ is called a \emph{proper $k$-thin graph} if $V(G)$ admits a $k$-partition $(V_1, \ldots, V_k)$, and an ordering $s$ of $V(G)$ such that both $s$ and its reversal are consistent with the partition $(V_1, \ldots,V_k)$. An ordering of this type is said to be a \emph{strongly consistent} ordering.
The interest on the study of both these classes  comes from the fact that some \NP-complete problems can be solved in polynomial time when the input graphs belong to them~\cite{Carl07,Bono18,B-M-O-thin-tcs}. Some of those efficient solutions have been exploited to solve real world problems as presented in~\cite{Carl07}.

On a theoretical perspective, defining  general graphs in terms of the concept of interval graphs has been of recurring interest in the literature. Firstly, note that these concepts measure ``how far'' a given graph $G$ is from being an interval graph, or yet, how $G$ can be ``divided'' into interval graphs, mutually bonded by a vertex order property. 
Namely, the vertices can be both partitioned and ordered in such a way that, for every part $V'$ of the partition and every vertex $v$ of the ordering, the vertex ordering obtained from the original by the removal of all vertices except from $v$ and those in $V'$ that precede $v$, no matter which part $v$ belongs, is a canonical ordering of an interval graph.
Characterizing general graphs in terms of the concept of interval graphs, or proper interval graphs, is not new. A motivation for such an approach is that the class of interval graphs is well-known, having several hundreds of research studies on an array of different problems on the class, and formulating general graphs as a function of interval graphs is a way to extend those studies to general graphs. Given that, both $k$-thin and proper $k$-thin graphs are new generalizations of this kind. 
The complexity of recognizing whether a graph is $k$-thin, or proper $k$-thin, is still an open problem even for a fixed $k \geq 2$. For a given vertex ordering, there are  polynomial time algorithms that compute a partition into a minimum number of classes  for which the ordering is consistent (resp. strongly consistent)~\cite{Bono18,B-M-O-thin-tcs}. On the other hand, given a vertex partition, the problem of deciding the existence of a vertex ordering which is consistent (resp. strongly consistent) with that partition is \NP-complete~\cite{Bono18}.

Other generalizations of interval graphs have been proposed. As examples, we may cite the $k$-interval and $k$-track interval graphs. A \emph{$k$-interval} is the union of $k$ disjoint intervals on the real line. A \emph{$k$-interval graph} is the intersection graph of a  family of $k$-intervals. Therefore, the $k$-interval graphs generalize the concept of interval graphs by allowing a vertex to be associated with a set of disjoint intervals. The \emph{interval number} $i(G)$ of $G$~\cite{Trotter1979OnDA} is the smallest number $k$ for which $G$ has a $k$-interval model. Clearly, interval graphs are the graphs with $i(G) = 1$. A \emph{$k$-track interval} is the union of $k$ disjoint intervals distributed in $k$ parallel lines, where each interval belongs to a distinct line. Those lines are called \emph{tracks}.
A \emph{$k$-track interval graph} is the intersection graph of $k$-track intervals. The \emph{multitrack number} $t(G)$ of $G$~\cite{Kumar94,Gyarfas95} is the minimum $k$ such that $G$ is a $k$-track interval graph. Interval graphs are equivalent to the $1$-interval graphs and $1$-track interval graphs. 
The problems of recognizing $k$-interval and $k$-track interval graphs are both \NP-complete~\cite{West84,Jiang2013}, for every $k \geq 2$. 

In this paper, we define subclasses of $k$-thin and proper $k$-thin graphs called \emph{\fpExt{k}} and \emph{\fppExt{k}} graphs, respectively, by adding the requirement that the vertices of each class have to be consecutive in the order. In both cases, when the vertex order is given, it can be proved that a greedy algorithm can be used to find a (strongly) consistent partition into consecutive sets with minimum number of parts. When, instead, the partition into $k$ parts is given, the problem turns out to be more interesting. We will call \emph{partitioned \fpExt{k} graphs} (resp. \emph{partitioned \fppExt{k} graphs}) the graphs for which there exists a (strongly) consistent vertex ordering in which the vertices of each part are consecutive for the given partition. Concerning partitioned \fpExt{k} graphs, we present a polynomial time recognition algorithm. With respect to partitioned \fppExt{k} graphs, we provide a proof of \NP-completeness for arbitrary $k$, and a polynomial time algorithm when $k$ is fixed. Also, we provide a characterization for both classes based on threshold graphs.


This work is organized as follows. 
Section~\ref{chpt:preliminaries} introduces the basic concepts and terminology employed throughout the text. 
Section~\ref{chpt:pthinness} presents a polynomial time recognition algorithm for partitioned \fpExt{k} graphs.  Section~\ref{chpt:ppthinness} proves the \NP-completeness for the recognition problem  of partitioned \fppExt{k} graphs. Moreover, it proves that if the number of parts of the partition is fixed, then the  problem is solvable in polynomial time. In Section~\ref{chpt:characterization}, we describe a characterization for \fpExt{k} and \fppExt{k} graphs. Finally,  Section~\ref{chpt:future} presents some concluding remarks.
\unskip
\section{Preliminaries}\label{chpt:preliminaries}

All graphs in this work are finite and have no loops
or multiple edges. Let $G$ be a graph. Denote by $V(G)$ its vertex set, by $E(G)$ its edge set.
Denote the \emph{size} of a set $S$ by $|S|$. Unless stated otherwise, $|V(G)| = n$ and $|E(G)| = m$. 
Let $u,v \in V(G)$, define $u$ and $v$ as \emph{adjacent} if $(u,v) \in E(G)$.

Let $V' \subseteq V(G)$. The \emph{induced subgraph} of $G$ by $V'$, denoted by $G[V']$, is the graph $G[V']= (V',E')$, where  $E' =  \{ (u,v)\in E(G) \mid u, v \in V' \}$. Analogously, for some $E' \subseteq E(G)$, let  the \emph{induced subgraph} of $G$ by $E'$, denoted by $G[E']$, be the graph $G[E']= (V',E')$, where  $V' =  \{ u, v \in V(G) \mid  (u,v)\in E' \}$. 
A graph $G'$ obtained from $G$ by \emph{removing} the vertex $v \in V(G)$ is defined as $G' = (V(G) \setminus \{v\},  E')$, where $E'=\{(u,w) \in E \mid u \neq v \text{ and } w \neq v\}$ 

Let $v \in V(G)$, denote by $N(v)=  \{u \in V(G) \mid v \text{ and } u \text{ are adjacent} \}$ the \emph{neighborhood} of $v \in V(G)$, and by
$N[v] = N(v)\cup\{v\}$ the \emph{closed neighborhood} of $v$. 
We define  $u, v \in V(G)$ as \emph{true twins} (resp. \emph{false twins}) if $N[u] = N[v]$ (resp. $N(u) = N(v)$).
A vertex $v$ of $G$ is \emph{universal} if $N[v] = V(G)$.
We define the \emph{degree} of $v \in V(G)$, denoted by \emph{$d(v)$},  as the number of neighbors of $v$ in $G$, i.e. $d(v) = |N(v)|$

A \emph{clique} or \emph{complete set} (resp.\ \emph{stable set} or \emph{independent
set}) is a set of pairwise adjacent (resp.\ nonadjacent) vertices.
We use \emph{maximum} to mean maximum-sized, whereas
\emph{maximal} means inclusion-wise maximal. The use of
\emph{minimum} and \emph{minimal} is analogous. A vertex $v \in V(G)$ is said to be \emph{simplicial} if $G[N(v)]$ is a clique. 

A \emph{coloring} of a graph is an assignment of colors to its
vertices such that any two adjacent vertices are assigned
different colors. The smallest number $t$ such that $G$ admits a
coloring with $t$ colors (a \emph{$t$-coloring}) is called the
\emph{chromatic number} of $G$ and is denoted by $\chi(G)$. A
coloring defines a partition of the vertices of the graph into
stable sets, called \emph{color classes}.

A graph $G(V,E)$ is a \emph{comparability graph} if
there exists an ordering $v_1, \dots , v_n$ of $V$ such that, for
each triple $(r,s,t)$ with $r<s<t$, if $v_r v_s$ and $v_s v_t$ are
edges of $G$, then so is $v_r v_t$.\ Such an ordering is a
\emph{comparability ordering}. A graph is a \emph{co-comparability
graph} if its complement is a comparability graph.

A \emph{tree} $T$ is a connected graph that has no cycles. A \emph{rooted tree} $T_v$ is a tree in which a vertex $v \in V(T)$ is labeled as the \emph{root} of the tree. All vertices, known as \emph{nodes} of $T_v$, have implicit positions in relation to the root. 
Let $u,w \in V(T_v)$, $w$ is a \emph{descendant} of $u$ if the path from $w$ to $v$ includes $u$. The node $w$ is said to be a \emph{child} of $u$ if it is a descendant of $u$ and $(w,u) \in E(T_v)$. The \emph{children} of $u$ is defined as the set that containing all child nodes of $u$. The node $u$ is said to be a \emph{leaf} of $T_v$ if it has no child in $T_v$. 
The \emph{subtree} $T_u$ of $T_v$  \emph{rooted at} the node $u$ is the \emph{rooted} tree that consists of $u$ as the root  and its descendants in $T_v$ as the nodes.  

A \emph{directed graph}, or \emph{digraph}, is a graph $D = (V,E)$ such that $E$ consists of ordered pairs of $V(G)$. A \emph{directed cycle} of a digraph $D$ is a sequence $v_1,v_2, \ldots,v_i$, $1 \leq i \leq n$, of vertices of $V(D)$ such that $v_1 = v_i$ and, for all $1 \leq j<i$, $(v_j,v_{j+1}) \in E(D)$. A \emph{directed acyclic graph} (\emph{DAG}) $D$ is a digraph with no directed cycles.  A \emph{topological ordering} of a DAG $D$ is a sequence $v_1,v_2,\ldots ,v_n$ of $V(D)$ such that there are no $1\leq i <j\leq n$ such that $(v_j,v_i) \in E(D)$. Determining a topological ordering of a DAG can be done in time $O(n+m)$~\cite{Knuth1}.

An \emph{ordering} $s$ of elements of a set $\mathcal{C}$, denoted by $V(s)$, consists of a sequence $e_1,e_2,\ldots,e_n$  of all elements of $\mathcal{C}$. We define $\bar{s}$ as the \emph{reversal} of $s$, that is, $\bar{s} = e_n,e_{n-1},\ldots,e_1$. We say that $e_i$ \emph{precedes} $e_j$ in $s$, denoted by $e_i < e_j$, if $i < j$. An ordered tuple $(a_1, a_2, \ldots,  a_k)$ of some elements of $\mathcal{C}$ is \emph{ordered according to} $s$ when, for all $1 \leq i < k$, if $a_i = e_j$ and $a_{i+1} = e_z$, then $j < z$. Given orderings $s_1$ and $s_2$, the ordering obtained by concatenating $s_1$ to $s_2$ is denoted by $s_1s_2$.

\subsection{Interval graphs}
The \emph{intersection graph} of a family $\mathcal{F}$ of sets is the graph $G$ such that $V(G) = \mathcal{F}$ and $S,T \in V(G)$ are adjacent if and only if $S \cap T \neq \emptyset$.
An \emph{interval graph} $G$ is the intersection graph of a family $\mathcal{R}$ of intervals of the real line; such a family is called an \emph{interval model}, or  a \emph{model}, of the graph. 
We say that $\mathcal{R}$ is associated to $G$ and vice-versa.
It  is worth  mentioning that an interval graph can be associated to several models but an interval model can be associated to a unique graph. 
Concerning interval graphs, there are some characterizations which are relevant to this work, that we present next.

\begin{theorem}[\cite{Olariu91}]\label{theo:can:ordering}
A graph $G$ is an interval graph if, and only if, there is an ordering $s$ of $V(G)$ such that, for any triple $(p,q,r)$ of $V(G)$ ordered according to $s$, if $(p,r) \in E(G)$, then $(q,r) \in E(G)$.
\end{theorem}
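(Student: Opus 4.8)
The plan is to establish the two implications separately, exhibiting an explicit interval model in one direction and an explicit vertex ordering in the other.

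For the forward implication, assume $G$ is an interval graph and fix an interval model $\mathcal{R}=\{I_v \mid v\in V(G)\}$ with $I_v=[\ell_v,r_v]$; perturbing endpoints if necessary, we may assume the $2n$ endpoints are pairwise distinct. Let $s$ be the ordering of $V(G)$ by increasing right endpoint, so that $u<v$ in $s$ precisely when $r_u<r_v$. Given a triple $(p,q,r)$ ordered according to $s$ with $(p,r)\in E(G)$, I would note that a common point of $I_p$ and $I_r$ together with $r_p<r_r$ forces $\ell_r\le r_p$, whence $\ell_r\le r_p<r_q$; since also $\ell_q\le r_q<r_r$, the intervals $I_q$ and $I_r$ overlap and $(q,r)\in E(G)$, as required.

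For the converse, let $s=v_1,\dots,v_n$ be an ordering with the stated property. The key observation is that, for each index $i$, the set of neighbors of $v_i$ that precede it in $s$ forms a contiguous block ending at $v_{i-1}$: if $v_a\in N(v_i)$ with $a<i$ and $a<b<i$, then applying the property to the triple $(v_a,v_b,v_i)$ yields $v_b\in N(v_i)$. Hence $N(v_i)\cap\{v_1,\dots,v_{i-1}\}=\{v_{f(i)},v_{f(i)+1},\dots,v_{i-1}\}$ for a suitable $f(i)\le i$ (with the convention $f(i)=i$ when $v_i$ has no earlier neighbor). I would then assign to $v_i$ the interval $I_{v_i}=[f(i),i]$ and check that this is an interval model of $G$. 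If $(v_i,v_j)\in E(G)$ with $i<j$, then $f(j)\le i\le j$, so the integer $i$ belongs to both $I_{v_i}$ and $I_{v_j}$; conversely, if $I_{v_i}\cap I_{v_j}\neq\emptyset$ with $i<j$, then since $f(i)\le i$ we must have $f(j)\le i$, so $v_i$ lies in the block $\{v_{f(j)},\dots,v_{j-1}\}$ and thus $v_i\in N(v_j)$.

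I do not expect any genuine obstacle here: the only delicate points are the genericity assumption on the endpoints in the forward direction (replaceable by an explicit tie-breaking rule) and the bookkeeping for vertices with empty earlier-neighborhood in the backward direction. The real content is the \emph{contiguous block} property of left-neighborhoods, which is exactly the combinatorial shadow of the interval structure, and from which the model $I_{v_i}=[f(i),i]$ emerges directly.
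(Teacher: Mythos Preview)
The paper does not actually prove this theorem: it is stated in the preliminaries with a citation to Olariu~\cite{Olariu91} and no argument is given. There is therefore nothing to compare your proposal against in the paper itself. That said, your argument is correct and self-contained; both directions are the standard ones (right-endpoint ordering for the forward implication, and the contiguous left-neighborhood observation yielding the model $I_{v_i}=[f(i),i]$ for the converse), and the bookkeeping is handled cleanly.
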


The ordering described in Theorem~\ref{theo:can:ordering} is said to be a \emph{canonical} ordering.
Figure~\ref{fig:canonico} depicts an interval graph in which the vertices are presented from left to right in one of its canonical orderings. 

\begin{figure}
\centering     
\subfigure[]{\label{fig:canonico}\includegraphics[width=64mm]{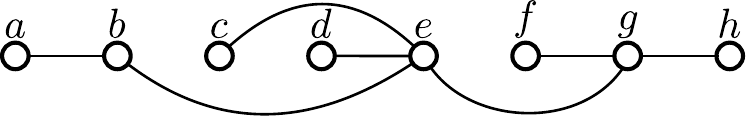}}
\hspace{0.45cm}
\subfigure[]{\label{fig:canonicoproprio}\includegraphics[width=64mm]{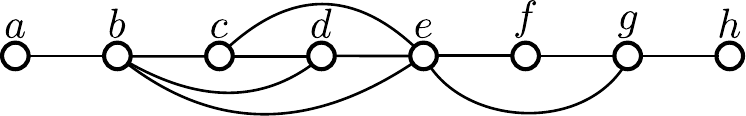}}
\caption{(a) Canonical ordering and (b) proper canonical ordering.}
\end{figure}

Let $\mathcal{R}$ be an interval model of an interval graph $G$. Note that, if we consider a vertical line that  intersects a subset of intervals in $\mathcal{R}$, then these intervals consist of a clique in $G$. This is true because they all contain the point in which this line is defined. Moreover, if the given line traverses a maximal set of intervals, then the corresponding clique is maximal. Figure~\ref{fig:model} depicts an interval model and the vertical lines that correspond to maximal cliques of the graph in Figure~\ref{fig:canonico}.
The following is a characterization of interval graphs in terms of the maximal cliques of the graph.

\begin{figure}[htb]
\centering     
\includegraphics[width=90mm]{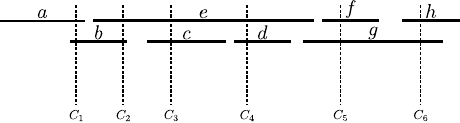}

\caption{ An interval model an the maximal cliques of the graph in Figure~\ref{fig:canonico}.}
\label{fig:model}
\end{figure}

\begin{theorem}[\cite{Roberts69}]\label{theo:clique:ordering}
A graph $G$ is an interval graph if, and only if, there is an ordering $s_C = C_1,C_2, \ldots, C_k$ of its maximal cliques such that if $v \in C_i \cap C_z$  with $1 \leq i \leq z \leq k$, then $v \in C_j$, for all $i \leq j \leq z$.
\end{theorem}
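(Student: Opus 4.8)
The statement to prove is Theorem~\ref{theo:clique:ordering}, the Roberts characterization of interval graphs via a consecutive ordering of maximal cliques. Let me sketch a proof plan.

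\medskip

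The plan is to prove both directions, leveraging Theorem~\ref{theo:can:ordering} (the canonical vertex ordering characterization) which has already been established, together with the geometric intuition of interval models described in the paragraph preceding the statement.

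\medskip

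For the forward direction, suppose $G$ is an interval graph. First I would fix an interval model $\mathcal{R}$ of $G$, which exists by definition. For each maximal clique $C$ of $G$, I would use the standard Helly property of intervals on the line: the intervals corresponding to vertices of $C$ pairwise intersect, hence they have a common point; choosing such a point $p_C$, and noting (as the paragraph on vertical lines observes) that a maximal set of intervals through a point corresponds to a maximal clique, I can associate to each maximal clique a distinct representative point. Then I would order the maximal cliques $C_1, \dots, C_k$ by increasing value of these representative points. The key claim is that if $v \in C_i \cap C_z$ with $i \le z$, then $v \in C_j$ for all $i \le j \le z$: since the interval $I_v$ of $v$ contains both $p_{C_i}$ and $p_{C_z}$, and $I_v$ is an interval (convex), it contains $p_{C_j}$ for every intermediate clique; hence $v$ is adjacent to all vertices of $C_j$ (each of whose intervals contains $p_{C_j}$), and since $C_j$ is a maximal clique, $v \in C_j$. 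An alternative route, avoiding explicit appeal to Helly, is to build the clique ordering directly from a canonical vertex ordering $s$ given by Theorem~\ref{theo:can:ordering}: one shows that the maximal cliques can be listed so that their ``first'' and ``last'' vertices in $s$ are monotone, and the consecutiveness of each vertex across cliques follows from the canonical ordering property; I would likely present the interval-model version as it is cleaner.

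\medskip

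For the converse, suppose there is an ordering $s_C = C_1, \dots, C_k$ of the maximal cliques with the stated consecutiveness property. I would construct an interval model: for each vertex $v$, let $\ell(v) = \min\{ j : v \in C_j\}$ and $r(v) = \max\{ j : v \in C_j\}$, and assign $v$ the interval $[\ell(v), r(v)]$. By the consecutiveness hypothesis, $v$ belongs to exactly the cliques $C_{\ell(v)}, \dots, C_{r(v)}$. I then need to verify this is a valid interval model of $G$, i.e. $(u,v) \in E(G)$ iff $[\ell(u), r(u)] \cap [\ell(v), r(v)] \ne \emptyset$. If the intervals intersect, there is an index $j$ with $u, v \in C_j$, so $u, v$ are adjacent. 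Conversely, if $u, v$ are adjacent, then $\{u,v\}$ is a clique, hence contained in some maximal clique, which (since the $C_i$ are all the maximal cliques) is some $C_j$; then $j \in [\ell(u),r(u)] \cap [\ell(v), r(v)]$, so the intervals meet. This shows $G$ is the intersection graph of these intervals, i.e. an interval graph. (Alternatively, one can read off a canonical vertex ordering directly — ordering vertices by $\ell(v)$, breaking ties suitably — and invoke Theorem~\ref{theo:can:ordering}; either finishes the argument.)

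\medskip

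The main obstacle, such as it is, lies in the forward direction: making rigorous the passage from an arbitrary interval model to a well-chosen finite set of representative points, one per maximal clique, and arguing that ordering by these points yields the consecutiveness property for \emph{every} vertex simultaneously. The crux is the convexity of intervals combined with the maximality of the cliques $C_j$ (to upgrade ``$v$ is adjacent to everything in $C_j$'' to ``$v \in C_j$''). The converse direction is essentially a direct verification and should present no real difficulty.
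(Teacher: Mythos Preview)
Your argument is correct and is essentially the standard textbook proof of this classical characterization. However, the paper does not supply its own proof of Theorem~\ref{theo:clique:ordering}: it is quoted from the literature (Roberts~\cite{Roberts69}) and used as a black box, so there is no in-paper proof to compare against. Your forward direction via the Helly property and representative clique points, and your converse via the explicit interval model $v \mapsto [\ell(v), r(v)]$, are both sound; the only minor point worth tightening is the implicit claim that the representative points $p_C$ can be chosen \emph{distinct} across distinct maximal cliques (which follows since two different maximal cliques cannot be witnessed by the same vertical line, as the set of intervals through a single point is a fixed clique).
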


The ordering described in Theorem~\ref{theo:clique:ordering} is said to be a \emph{canonical clique} ordering. The ordering of the maximal cliques depicted in Figure~\ref{fig:model}, read from left to right, represents a canonical clique ordering.

A \emph{proper interval graph} is an interval graph that admits an interval model in which no interval properly contains another. There is a characterization of proper interval graphs which is 
similar to that presented in Theorem~\ref{theo:can:ordering}, as described next.

\begin{theorem}[\cite{Roberts69}]\label{theo:procan:ordering}
A graph $G$ is a proper interval graph if, and only if, there is an ordering $s$ of $V(G)$  such that, for any triple $(p,q,r)$ of $V(G)$ ordered according to $s$, if $(p,r) \in E(G)$, then $(p,q), (q,r) \in E(G)$.
\end{theorem}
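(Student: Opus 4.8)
The plan is to prove the two implications by passing back and forth between a proper interval model of $G$ and the vertex ordering in the statement, exactly in the spirit of Theorem~\ref{theo:can:ordering} but with the extra ``proper'' bookkeeping.

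For the forward direction, I would start from a proper interval model $\{I_v = [l_v,r_v] : v\in V(G)\}$ and first normalize it by a routine perturbation (breaking ties among equal intervals) so that the $2n$ endpoints are pairwise distinct; in that case the absence of proper containments is equivalent to the clean statement $l_u < l_v \iff r_u < r_v$ for all $u\neq v$. Let $s$ be the ordering of $V(G)$ by increasing left endpoint. To check the condition, take a triple $(p,q,r)$ ordered according to $s$, so that $l_p<l_q<l_r$ and hence $r_p<r_q<r_r$, and suppose $(p,r)\in E(G)$, i.e. $I_p\cap I_r\neq\emptyset$; since $l_p<l_r$, this forces $l_r\le r_p$, whence $l_p<l_q<l_r\le r_p$ gives $l_q\in I_p$ (so $(p,q)\in E(G)$) and $l_q<l_r\le r_p<r_q$ gives $l_r\in I_q$ (so $(q,r)\in E(G)$). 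This direction is essentially endpoint arithmetic.

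For the converse, assume $s=v_1,\dots,v_n$ satisfies the condition. The first step is to read off the combinatorial shape of the neighbourhoods. The implication ``$(p,r)\in E(G)\Rightarrow(p,q)\in E(G)$'' says that whenever $v_i v_k\in E(G)$ with $i<k$ then $v_iv_j\in E(G)$ for all $i<j<k$, so the later neighbours of $v_i$ form an initial block $\{v_{i+1},\dots,v_{f(i)}\}$ of the suffix after $v_i$, where $f(i)=\max\{k:v_iv_k\in E(G)\}$ (and $f(i)=i$ if $v_i$ has no later neighbour); symmetrically ``$(p,r)\in E(G)\Rightarrow(q,r)\in E(G)$'' makes earlier neighbourhoods final blocks. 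The crucial step is to show that $f$ is nondecreasing: if $i<j$ but $f(j)<f(i)$, then $i<j<f(i)$ and $v_iv_{f(i)}\in E(G)$, so applying the hypothesis to the triple $(v_i,v_j,v_{f(i)})$ yields $v_jv_{f(i)}\in E(G)$, contradicting $f(j)<f(i)$. With $f$ nondecreasing in hand, I would exhibit a proper interval model by the assignment $v_i\mapsto I_i=[\,i,\ f(i)+\tfrac{i}{n+1}\,]$: since $f$ is nondecreasing and $\tfrac{i}{n+1}\in(0,1)$ is strictly increasing, the right endpoints are strictly increasing in $i$, so no interval is contained in another; and for $i<j$ one checks $I_i\cap I_j\neq\emptyset \iff j\le f(i)+\tfrac{i}{n+1}\iff j\le f(i)\iff v_iv_j\in E(G)$, so $\{I_i\}$ is a proper interval model of $G$.

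The main obstacle is the monotonicity of $f$ — i.e. that the ``forward reaches'' of the vertices nest correctly — since once that is established the model construction and its verification are routine, modulo a little care with the degenerate vertices having no later (or no earlier) neighbour and with the normalization of the given model. As an alternative to the explicit construction in the converse direction, one could observe that the stated condition is strictly stronger than the canonical-ordering condition of Theorem~\ref{theo:can:ordering}, so $G$ is already an interval graph, and then invoke the fact that a $K_{1,3}$-free interval graph is a proper interval graph; but the direct construction above is self-contained and, as a bonus, exposes the near-unit structure of the resulting model.
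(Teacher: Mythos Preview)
Your argument is correct. Note, however, that the paper does not supply its own proof of this statement: Theorem~\ref{theo:procan:ordering} is quoted from Roberts~\cite{Roberts69} as background, just like Theorems~\ref{theo:can:ordering} and~\ref{theo:clique:ordering}, so there is no ``paper's proof'' to compare against. Your write-up is a clean, self-contained proof of the cited result; the forward direction is the standard endpoint argument, and in the converse the monotonicity of the forward-reach function $f$ is exactly the right lever, with the explicit model $I_i=[\,i,\ f(i)+i/(n+1)\,]$ doing the job. The only places worth a sentence of extra care are the normalization step (in a proper model, coinciding endpoints force equal intervals, so the perturbation really is routine) and the degenerate case $f(i)=i$, both of which you already flag.
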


The ordering specified in the previous theorem is defined as a \emph{proper canonical} ordering. The ordering of the vertices of the graph depicted in Figure~\ref{fig:canonicoproprio} from left to right is proper canonical. Note that a proper canonical ordering is also a canonical ordering.
An important property of proper canonical orderings is the following: 

\begin{lemma}[\cite{Roberts69}]
If $G$ is a connected proper interval graph. Then a proper canonical ordering of $G$ is unique up to reversion and permutation of mutual true twin vertices.
\end{lemma}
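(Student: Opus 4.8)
The plan is to first read off, directly from the defining implication of a proper canonical ordering (Theorem~\ref{theo:procan:ordering}), a handful of structural properties, and then to argue that these properties already determine the ordering up to the two operations in the statement. Fix a proper canonical ordering $s=v_1,\dots,v_n$ of the connected proper interval graph $G$. Applying the implication to suitable triples gives at once: \emph{(i)} for every $i$, the set $N[v_i]$ consists of consecutive vertices $v_{\ell_i},\dots,v_{r_i}$ with $\ell_i\le i\le r_i$; \emph{(ii)} the integer sequences $(\ell_i)_i$ and $(r_i)_i$ are non-decreasing; and \emph{(iii)} consecutive vertices are adjacent, since if $v_iv_{i+1}\notin E(G)$ then applying the implication first to $(v_a,v_i,v_b)$ and then to $(v_i,v_{i+1},v_b)$ shows that no edge $v_av_b$ with $a\le i<b$ can exist, so there is no edge between $\{v_1,\dots,v_i\}$ and $\{v_{i+1},\dots,v_n\}$, contradicting connectivity. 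It is also immediate that the reversal $\bar s$ of a proper canonical ordering is proper canonical.

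Next I would pin down the role of true twins. By \emph{(i)} and \emph{(ii)}, if $i<j$ and $N[v_i]=N[v_j]$ then $N[v_c]=N[v_i]$ for all $i\le c\le j$; hence any set of vertices with a common closed neighbourhood is exactly a set of pairwise true twins, and it occupies a consecutive block of $s$. Permuting the vertices inside such a maximal block preserves \emph{(i)}--\emph{(iii)} and the defining implication, hence produces another proper canonical ordering --- this is the ``permutation of mutual true twins'' freedom in the statement. Since conversely every proper canonical ordering lists the true-twin classes of $G$ in some order, each in a consecutive block, it suffices to prove: the quotient graph $\hat G$ obtained by contracting every true-twin class to one vertex --- still a connected proper interval graph, but now with pairwise distinct closed neighbourhoods --- has a proper canonical ordering that is unique up to reversal.

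For $\hat G$ I would induct on $|V(\hat G)|$; the base case $|V(\hat G)|=1$ is trivial. In the inductive step, let $w_1$ head a proper canonical ordering $w_1,\dots,w_n$ of $\hat G$. By \emph{(i)}, $N[w_1]=\{w_1,\dots,w_{r_1}\}$ is an initial block, and applying the implication with $w_1$ in the first coordinate shows it is a clique, so $w_1$ is simplicial; by \emph{(iii)}, $\hat G-w_1$ is connected with proper canonical ordering $w_2,\dots,w_n$. The decisive step is to prove that only two vertices of $\hat G$ can head a proper canonical ordering --- exactly the first and last vertex of any fixed one. Granting this: given a second proper canonical ordering $s'$, after possibly replacing $s'$ by $\bar{s'}$ we may assume $s'$ also starts with $w_1$; deleting $w_1$ and invoking the induction hypothesis, the tail of $s'$ equals $w_2,\dots,w_n$ or its reverse, up to permuting true-twin blocks of $\hat G-w_1$. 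The reversed alternative is impossible, since it would force $w_1\sim w_n$, hence (via \emph{(ii)}) $N[w_1]=V(\hat G)$, i.e.\ $w_1$ universal --- but a universal vertex cannot head a proper canonical ordering of a twin-free connected proper interval graph on $\ge 2$ vertices (if it did, \emph{(ii)} would force $N[w_j]=\{w_1,\dots,w_j\}$ for all $j$, contradicting $w_2\not\sim w_3$). Finally, a true-twin block of $\hat G-w_1$ of size $\ge 2$ contains exactly one neighbour and one non-neighbour of $w_1$ (two neighbours, or two non-neighbours, would be true twins of $\hat G$), so its internal order is forced identically in $s$ and $s'$ by \emph{(i)}. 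Hence $s'=s$; re-expanding the twin classes proves the Lemma.

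The only step that is not a mechanical unwinding of the definition is the claim used above that a twin-free connected proper interval graph on $\ge 2$ vertices has exactly two vertices able to head a proper canonical ordering (and, relatedly, that deleting such a vertex creates no troublesome new true twins). I expect this to be the main obstacle; I would prove it by contradiction, using the distinctness of the closed neighbourhoods together with the monotonicity \emph{(ii)} to rule out an ``interior'' vertex heading an ordering.
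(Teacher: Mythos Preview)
The paper does not prove this lemma: it is quoted, without proof, as a classical result of Roberts~\cite{Roberts69}. There is therefore no argument in the paper to compare yours against.

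On your outline itself: the reduction to the twin-free quotient and the inductive scheme are the natural route, and you have correctly isolated the crux --- that in a twin-free connected proper interval graph only the two extreme vertices of a fixed proper canonical ordering can head \emph{any} proper canonical ordering. Be aware that ``simplicial'' alone will not carry this: in the twin-free connected proper interval graph with maximal cliques $\{a,b\},\{b,c,d\},\{d,e\}$ the vertex $c$ is simplicial, yet cannot head an ordering. So the contradiction you sketch needs more than monotonicity~(ii); one way through is to combine simpliciality with the (inductively known) fact that the unique ordering of $G-v$ must begin at a neighbour of $v$.

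Two smaller points. Your parenthetical argument that a universal vertex cannot head an ordering of a twin-free graph on $\ge 2$ vertices contains a slip: from $r_1=n$ and~(ii) you get $r_j=n$, hence $N[w_j]=\{w_{\ell_j},\dots,w_n\}$, not $\{w_1,\dots,w_j\}$; and ``$w_2\not\sim w_3$'' contradicts your own item~(iii). The intended contradiction is simply that $r_2=n$ together with $w_1\in N[w_2]$ forces $N[w_2]=V=N[w_1]$, violating twin-freeness. Also, $\hat G-w_1$ need not be twin-free, so the induction hypothesis you invoke should be the full statement (uniqueness up to reversal \emph{and} true-twin permutation), not its twin-free specialisation; your text gestures at this but does not say it.
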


As a consequence, if a proper interval graph $G$ is disconnected, since each component has a unique proper canonical ordering, then the canonical ordering of $G$ consists of a permutation of canonical orderings of each of those components. 

Let $s$ be an ordering of $V(G)$ and $s_C = C_1,C_2, \ldots, C_k$ an ordering of the maximal cliques of $G$. 
The sequence $s$ is said to be \emph{ordered according} to $s_C$ if for all $u,v \in V(G)$ such that $u<v$ in $s$, there are no $ 1\leq i <j \leq k$ such that $v \in C_i \setminus C_j$ and  $u \in C_j$. As an example, note that the canonical ordering of the Figure~\ref{fig:canonico} is ordered according to  the maximal clique ordering $s_C = C_1,C_2, \ldots, C_6$ of the Figure~\ref{fig:model}.
The following lemma relates the characterization of canonical orderings and canonical clique orderings.

\begin{lemma}\label{lemma:clique:canonical}
Let $G$ be a graph and $s$ be an ordering of $V(G)$.  The ordering $s$ is a canonical ordering of $V(G)$ if, and only if, $s$ is ordered according to a canonical clique ordering of $G$.
\end{lemma}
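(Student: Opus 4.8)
The plan is to prove the two implications of Lemma~\ref{lemma:clique:canonical} separately, using the combinatorial characterizations from Theorems~\ref{theo:can:ordering} and~\ref{theo:clique:ordering} together with the (standard) fact that the maximal cliques of an interval graph are exactly the vertex sets that appear as columns in an interval model, i.e.\ that a set of vertices is a maximal clique if and only if it is a maximal set of pairwise-adjacent vertices that ``survives'' at some point of a canonical ordering. So a first step is to recast a canonical ordering $s$ of $G$ in terms of the intervals it induces: for each vertex $v$, let $L(v)$ and $R(v)$ be the first and last vertices of the closed neighborhood $N[v]$ in the order $s$; the defining property of a canonical ordering (Theorem~\ref{theo:can:ordering}) is precisely what makes $\{[L(v),R(v)] : v\in V(G)\}$ an interval model of $G$, so $s$ being canonical already gives us a concrete model to work with.

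For the forward direction, assume $s$ is canonical. I would construct the clique ordering by listing, in the order they first become ``complete'' along $s$, the maximal cliques of $G$. Concretely, scan $s$ from left to right; a vertex $v$ at position $i$ together with all earlier vertices still adjacent to it forms a clique, and $v$ closes off a maximal clique exactly when the next vertex is nonadjacent to some current member. This yields an ordering $s_C = C_1,\ldots,C_k$ of the maximal cliques; the consecutiveness property required by Theorem~\ref{theo:clique:ordering} (if $v\in C_i\cap C_z$ then $v\in C_j$ for all $i\le j\le z$) follows from the interval-model description above, since each $C_\ell$ corresponds to a point of the real line and the set of points covered by a fixed vertex $v$ is the interval $[L(v),R(v)]$, hence an interval of consecutive clique-indices. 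Finally I would check that $s$ is ordered according to this $s_C$: if $u<v$ in $s$ but $v\in C_i\setminus C_j$ and $u\in C_j$ with $i<j$, then $v$'s interval ends before $C_j$ while $u$'s interval reaches $C_j$, and using that $u<v$ and the canonical property one derives a contradiction (essentially $v$ would have to be adjacent to everything $u$ is adjacent to among later vertices, forcing $v\in C_j$).

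For the converse, assume $s_C = C_1,\ldots,C_k$ is a canonical clique ordering of $G$ and $s$ is ordered according to it; I want to show $s$ is canonical, i.e.\ verify the triple condition of Theorem~\ref{theo:can:ordering}. Take $p<q<r$ in $s$ with $(p,r)\in E(G)$; then $p$ and $r$ lie together in some maximal clique $C_\ell$. Because $s$ is ordered according to $s_C$, the clique indices containing $q$ cannot lie entirely before those of $p$ nor entirely after those of $r$ (the ``no $v\in C_i\setminus C_j$, $u\in C_j$'' condition applied to the pairs $(p,q)$ and $(q,r)$), so the index-interval of $q$ must intersect or straddle $\ell$; combined with the consecutiveness of each vertex's clique-index set (guaranteed by $s_C$ being a canonical clique ordering) one concludes $q\in C_\ell$ as well, whence $(q,r)\in E(G)$. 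The main obstacle, and the step deserving the most care, is exactly this interplay: translating ``$s$ ordered according to $s_C$'' into a statement about the positions of the left/right endpoints of each vertex's run of cliques, and checking that the three vertices $p,q,r$ are forced into a common clique. Once that bookkeeping is set up cleanly — ideally by fixing for each $v$ the pair $(\alpha(v),\beta(v))$ of smallest and largest clique indices containing $v$, noting $v$ is in every clique between them, and showing $p<q$ in $s$ implies $\alpha(p)\le\alpha(q)$ and $\beta(p)\le\beta(q)$ — both directions reduce to short interval arguments, and the lemma follows.
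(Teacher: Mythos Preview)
Your overall strategy—reducing both directions to interval bookkeeping via the clique-index pair $(\alpha(v),\beta(v))$—is viable and differs from the paper, which proves $(\Rightarrow)$ by induction on $n$ (remove $v_n$, extend the clique ordering obtained by hypothesis) and $(\Leftarrow)$ by explicitly building $s$ from $s_C$ via successive removal of simplicial vertices. Your route is more direct, but two of your intermediate claims are false as stated and need repair.

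In the forward direction, $\{[L(v),R(v)]:v\in V(G)\}$ is \emph{not} in general an interval model of $G$. Take $P_3=a\text{--}b\text{--}c$ with the canonical ordering $a,c,b$: then $a$ receives $[1,3]$ and $c$ receives $[2,3]$, which intersect although $ac\notin E$. The asymmetry of Theorem~\ref{theo:can:ordering} (it forces $(q,r)$ but not $(p,q)$) means you may only extend to one side: the intervals $[\ell(v),\mathrm{pos}(v)]$, with $\ell(v)$ the position of the first vertex of $N[v]$, do give a model, and with that correction your scan-and-read-off-cliques argument goes through. The same $P_3$ example also breaks your claim that ``$v_i$ together with all earlier vertices adjacent to it forms a clique''.

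In the converse direction, $p<q$ in $s$ does \emph{not} imply $\alpha(p)\le\alpha(q)$: if $q$ lies in every maximal clique it can be placed after any $p$, yet $\alpha(q)=1$. Only $\beta(p)\le\beta(q)$ follows from the definition of ``ordered according to $s_C$'', and consequently $q\in C_\ell$ need not hold. Fortunately you do not need either: from $p<q<r$ one gets $\beta(p)\le\beta(q)\le\beta(r)$, and since $p,r\in C_\ell$ forces $\alpha(r)\le\ell\le\beta(p)\le\beta(q)\le\beta(r)$, both $q$ and $r$ lie in $C_{\beta(q)}$, giving $(q,r)\in E(G)$ directly. With these two fixes your argument is complete.
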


\begin{proof}
Consider $s_C$ a canonical clique ordering of $G$. We will show that is possible to build a canonical ordering $s$ from $s_C$, respecting its clique ordering. To achieve this, first start with an empty sequence $s$. Iteratively, for each element $X$ of the sequence $s_C$, choose all simplicial vertices of $X$, adding them to $s$ in any order and removing them from $G$. 
Note that some of the vertices of $X$ that were not removed from $G$, because they were not simplicial in $G$, now may be turned simplicial by the removal of vertices.
Clearly, at the end of the process, $s$ will contain all the vertices of $G$.
Suppose $s$ is not a canonical ordering, that is, there are $p,q,r \in V(G)$, $p<q<r$ in $s$, such that $(p,r) \in E(G)$ and $(q,r) \not \in E(G)$. 
Let $C_p$, $C_q$ and $C_r$ be the maximal cliques of $s_C$ being processed at moment $p$, $q$ and $r$ were choose, respectively.  Note that $C_p < C_q < C_r$ in $S_c$ and, as $C_p$ is the last maximal clique in $s_C$ that contains $p$ and $(p,r) \in E(G)$,  $r \in C_p$. Besides, as $(q,r) \not \in E(G)$, $r \not\in C_q$ . Therefore, $r \in C_p \cap C_r$ and $r \not\in C_q$. A contradiction with the fact that $s_C$ is a canonical clique ordering. Hence, $s$ is a canonical ordering.

Consider $s= v_1,v_2, \ldots, v_n$ be a canonical ordering of $G$. We prove by induction in $|V(G)| = n$ that $s$ is ordered according to a canonical clique ordering. Clearly, the statement is true for $n = 1$. Suppose that the statement is true for any  $1 \leq n'< n$.
Let $s'$ be the sequence obtained from $s$ by removing $v_n$. Clearly $s'$ is also a canonical ordering and, by the induction hypothesis, $s'$ is ordered according with  a canonical clique ordering $s'_C = C_1,C_2, \ldots, C_k$. Let $C_i$, with $1 \leq i \leq k$, be the first clique of $s'_C$ such that $v_n \in N(C_i)$. 
Let $C' _j = \{C_j \cap N(v_n)\} \cup \{v_n\}$, $i \leq j \leq k$.
Note that, as $s$ is a canonical ordering, for all $C_j$ of $s'_C$, $i < j \leq k$, $\{C_j \cap N(v_n)\}\cup \{v_n\}$ is a maximal clique of $G$. 
If $C_i \subseteq N(v_n)$, then  $s'_C = C_1,C_2, \ldots, C'_i,C'_{i+1}, \ldots, C'_k$ is a canonical clique ordering that matches $s$. Otherwise, $s'_C = C_1,C_2, \ldots, C_i,C'_i,C'_{i+1}, \ldots, C'_k$ is a canonical clique ordering that matches $s$. Therefore, $s$ is ordered according to a canonical clique ordering of $G$.
\end{proof}

\subsubsection {PQ trees}

A \emph{$PQ$ tree}~\cite{BOOTH76} $T$ is a data structure consisting of an ordered tree that describes a family $\mathcal{F}$ of permutations of elements from a given set $\mathcal{C}$. 
In a $PQ$ tree $T$, the set of leaves is $\mathcal{C}$ and the permutation being represented by $T$ is the sequence of the leaves from left to right. Regarding  internal nodes, they are classified into two types, the $P$ and the $Q$ nodes. 
An \emph{equivalent} $PQ$ tree $T'$ to $T$ is a tree obtained from $T$ by any sequence of consecutive transformations, each consisting of either permuting the children of a $P$ node, or reversing the children of a $Q$ node. The family $\mathcal{F}$ of permutations of elements from $\mathcal{C}$ represented by  $T$ is that of permutations corresponding to all equivalent $PQ$ trees to $T$. 

Graphically, in a $PQ$ tree, leaves and $P$ nodes are represented by circles and $Q$ nodes by rectangles. In representations of schematic $PQ$ trees, a node represented by a circle over a rectangle will denote that, in any concrete $PQ$ tree conforming the scheme, such a node is either a $P$ node, or a $Q$ node.
Figure~\ref{fig:PQT:exeOp} depicts the described operations. In Figure~\ref{fig:PQT:1}, we have a partial representation of a $PQ$ tree. Figure~\ref{fig:PQT:2} depicts an equivalent $PQ$ tree obtained from a permutation of children of a $P$ node of this tree and Figure~\ref{fig:PQT:3} exemplifies an equivalent $PQ$ tree from the reversion of the children of a $Q$ node. 

    

\begin{figure}[htbp]
    \centering 
    \subfigure[]{\label{fig:PQT:1}\includegraphics[width=37.7mm]{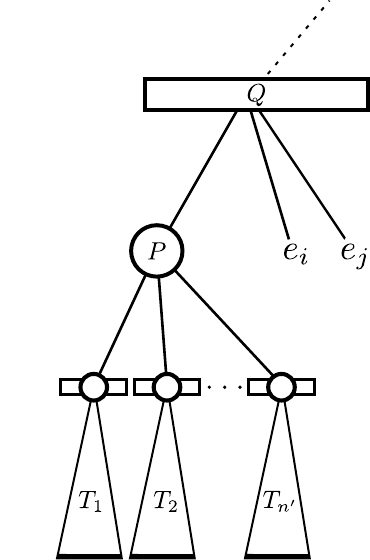}}
    \subfigure[]{\label{fig:PQT:2}\includegraphics[width=37.7mm]{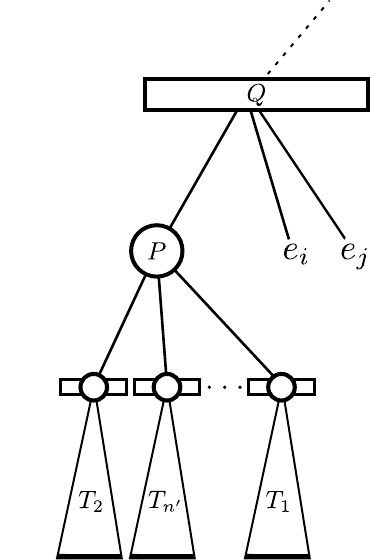}}\hfill
    \subfigure[]{\label{fig:PQT:3}\includegraphics[width=37.7mm]{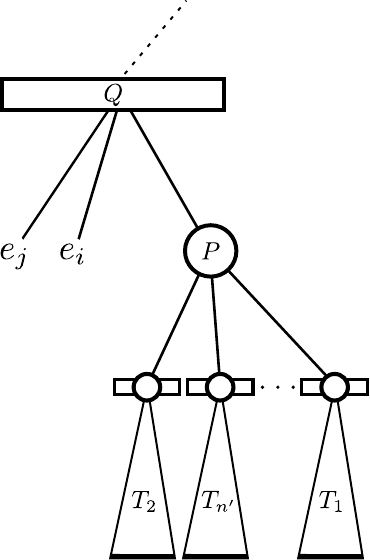}}\\\
    
\caption{A $PQ$ tree (a) and an example of permutations (b) and reversions (c) of children of its nodes. }
\label{fig:PQT:exeOp}
\end{figure}

One of the applications from the seminal paper introducing $PQ$ trees is that of recognizing interval graphs. 
In such an application, each  leaf of the $PQ$ tree  is a maximal clique of an interval graph $G$ and the family of permutations the tree represents is precisely all the canonical clique orderings of $G$~\cite{BOOTH76}. A $PQ$ tree can be constructed from an interval graph in time $O(n+m)$~\cite{BOOTH76}. 

We will say that a  vertex $v$ belongs to a node $X$ of a $PQ$ tree, and naturally denote by $v \in X$,  if it belongs to any leaf that descends from this node. Figure~\ref{fig:pre:PQT} depicts a $PQ$ tree of the interval graph of Figure~\ref{fig:canonico} according to the maximal cliques in  Figure~\ref{fig:model}. The permutations implicitly represented by this  $PQ$ tree are:

\begin{minipage}[t]{0.9\linewidth}
    \vspace{2ex}
    \begin{minipage}[t]{0.4\linewidth}
        \centering
        \begin{itemize}
            \item $C_1,C_2,C_3,C_4,C_5,C_6$
            \item $C_1,C_2,C_3,C_4,C_6,C_5$
            \item $C_1,C_2,C_4,C_3,C_5,C_6$
            \item $C_1,C_2,C_4,C_3,C_6,C_5$
            \item $C_6,C_5,C_4,C_3,C_2,C_1$
            \item $C_5,C_6,C_4,C_3,C_2,C_1$
            \item $C_6,C_5,C_3,C_4,C_2,C_1$
            \item $C_5,C_6,C_3,C_4,C_2,C_1$
        \end{itemize}
    \end{minipage}
    \hfill
    \begin{minipage}[t]{0.4\linewidth}
            \captionsetup{type=figure}
            \vspace{-1.5ex}
            \centering
            \includegraphics[width=40mm]{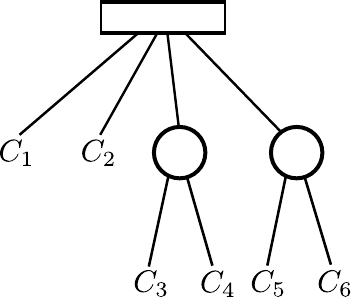}
            \caption{A $PQ$ tree of the interval graph of Figure~\ref{fig:canonico} according to the maximal cliques in  Figure~\ref{fig:model}.}
            \label{fig:pre:PQT}
    \end{minipage}
\end{minipage}

    


\subsection {Thinness and proper thinness}

A graph $G$ is called a \emph{$k$-thin graph} if there is a $k$-partition $(V_1,V_2 \ldots, V_k)$ of $V(G)$ and an ordering $s$ of $V(G)$ such that, for any triple $(p,q,r)$ of $V(G)$ ordered according to $s$, if $p$ and $q$ are in a same part $V_i$ and $(p,r) \in E(G)$, then $(q,r)\in E(G)$. An ordering and a partition satisfying that property are called \emph{consistent}. That is, a graph is  $k$-thin if there is an ordering consistent with some $k$-partition of its vertex set. 
The \emph{thinness} of $G$, denoted by \emph{$thin(G)$}, is the minimum  $k$ for which $G$ is a $k$-thin graph. 

A graph $G$ is called a \emph{proper $k$-thin graph} if $G$ admits a $k$-partition $(V_1, \ldots, V_k)$ of $V(G)$ and an ordering $s$ of $V(G)$ consistent with the partition and, additionally, for  any triple $(p,q,r)$ of $V(G)$, ordered according to $s$, if $q$ and $r$ are in a same part $V_i$ and $(p,r) \in E(G)$, then $(p,q)\in E(G)$. Equivalently, an ordering $s$ of $V(G)$ such that $s$ and its reverse are consistent with the partition. Such an ordering and partition are called \emph{strongly consistent}.
The \emph{proper thinness} of $G$, or \emph{$pthin(G)$}, is the minimum $k$ for which $G$ is a proper $k$-thin graph.

Figures~\ref{fig:c4thin} and~\ref{fig:c4pthin} depict two bipartitions of a graph, in which the classes are represented by distinct colors, and two different vertex orderings. The ordering of Figure~\ref{fig:c4thin} is consistent with the corresponding partition but not strongly consistent, while the ordering of Figure~\ref{fig:c4pthin} is strongly consistent with the corresponding partition.  

 Note that $k$-thin graphs (resp. proper $k$-thin graphs) generalize interval graphs (resp. proper interval graphs). The $1$-thin graphs (resp. proper $1$-thin graphs) are the interval graphs (resp. proper interval graphs). The parameter $\thin(G)$ (resp. $\pthin(G)$) is in a way a measure of how far a graph is from being an interval graph (resp. proper interval graph). 

For instance, consider the graph $C_4$. Since $C_4$ is not an interval graph, $\pthin(C_4) \geq \thin(C_4) > 1$. Figure~\ref{fig:c4} proves that $\thin(C_4) = \pthin(C_4) = 2$.

A characterization of $k$-thin or proper $k$-thin graphs by forbidden induced subgraphs is only known for $k$-thin graphs within the class of cographs~\cite{Bono18}. Graphs with arbitrary large thinness were presented in~\cite{Carl07}, while in~\cite{Bono18} a family of interval graphs with arbitrary large proper thinness was used to show that the gap between thinness and proper thinness can be arbitrarily large. The relation of thinness and other width parameters of graphs like boxicity, pathwidth, cutwidth and linear MIM-width was shown in~\cite{Carl07,Bono18}. 

Let $G$ be a graph and $s$ an ordering of its vertices. The graph $G_{s}$ has $V(G)$ as vertex set, and $E(G_{s})$ is such that for $v < w$, $(v,w) \in E(G_{s})$ if and only if there is a vertex $z$ in $G$ such that $v < w < z$, $(z,v) \in E(G)$ and $(z,w) \not \in E(G)$. 
Similarly, the graph $\tilde{G}_s$ has $V(G)$ as vertex set, and $E(\tilde{G}_{s})$ is such that for $v < w$, $(v,w) \in E(\tilde{G}_{s})$ if and only if either there is a vertex $z$ in $G$ such that $v < w < z$, $(z,v) \in E(G)$ and $(z,w) \not \in E(G)$ or there is a vertex $x$ in $G$ such that $x < v < w$, $(x,w) \in E(G)$ and $(x,v) \not \in E(G)$.

\begin{theorem}\label{thm:coloring-order}\cite{Bono18,B-M-O-thin-tcs} Given a graph $G$ and an ordering ${s}$ of its vertices, a partition of $V(G)$ is consistent (resp. strongly consistent) with the ordering ${s}$ if and only if the partition is a valid coloring of $G_{s}$ (resp. $\tilde{G}_s$), which means that each part corresponds to a color in the coloring under consideration. 
\end{theorem}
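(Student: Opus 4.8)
The plan is to translate the purely combinatorial definition of a (strongly) consistent ordering into a statement about monochromatic edges of the auxiliary graph $G_s$ (resp. $\tilde{G}_s$), and then simply read off the equivalence with proper colorings.

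First I would record the contrapositive of consistency. Given the ordering $s$, a partition $(V_1,\dots,V_k)$ \emph{fails} to be consistent with $s$ precisely when there is a part $V_i$, two vertices $p,q\in V_i$ with $p<q$ in $s$, and a vertex $r$ with $q<r$ in $s$ such that $(p,r)\in E(G)$ and $(q,r)\notin E(G)$. Comparing with the definition of $G_s$ (take $v=p$, $w=q$, $z=r$), this is exactly the assertion that $(p,q)\in E(G_s)$ and $p,q$ lie in the same part; conversely, every edge of $G_s$ comes from such a witnessing triple. Hence the partition is consistent with $s$ if and only if no part of it contains an edge of $G_s$, i.e. each $V_i$ is a stable set of $G_s$, i.e. the partition is a proper coloring of $G_s$. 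This disposes of the first (non-proper) half of the statement.

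For the strongly consistent case I would first establish the identity $E(\tilde{G}_s)=E(G_s)\cup E(G_{\bar s})$, viewing all three as (undirected) edge sets on $V(G)$. For a pair $v<w$ in $s$, the first alternative in the definition of $\tilde{G}_s$ is verbatim the defining condition for $(v,w)\in E(G_s)$. For the second alternative, with witness $x$ satisfying $x<v<w$ in $s$, one passes to $\bar{s}$: there $w$ precedes $v$ precedes $x$, and the condition $(x,w)\in E(G)$, $(x,v)\notin E(G)$ is exactly the defining condition for $\{v,w\}$ to be an edge of $G_{\bar s}$ (with the role of ``$z$'' played by $x$). Conversely every edge of $G_{\bar s}$ yields such a second-alternative witness. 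Since a strongly consistent ordering is by definition one for which both $s$ and $\bar{s}$ are consistent with the partition, and since $\bar{s}$ orders the same parts, applying the first half twice gives: the partition is strongly consistent with $s$ iff it is a proper coloring of $G_s$ and of $G_{\bar s}$, iff it is a proper coloring of $G_s\cup G_{\bar s}=\tilde{G}_s$.

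The argument is essentially bookkeeping, so the main obstacle is not depth but care: one must keep straight which of the vertices $p,q,r$ in the ``consistency triple'' plays the role of $v$, $w$, $z$ in the auxiliary graph, and, crucially, verify that the ``look to the left'' clause defining the extra edges of $\tilde{G}_s$ matches precisely the ``look to the right'' clause defining $G_{\bar s}$. Once the set identity $E(\tilde{G}_s)=E(G_s)\cup E(G_{\bar s})$ is in place, the reformulation as a proper coloring is immediate, since a coloring is proper for a union of graphs on the same vertex set exactly when it is proper for each summand.
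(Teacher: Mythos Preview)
Your argument is correct; the unpacking of the definitions of $G_s$ and $\tilde G_s$ against the contrapositive of (strong) consistency is exactly right, and the identity $E(\tilde G_s)=E(G_s)\cup E(G_{\bar s})$ is the clean way to reduce the proper case to two instances of the non-proper case. Note, however, that the paper does not prove this theorem at all: it is quoted from~\cite{Bono18,B-M-O-thin-tcs} and stated without proof, so there is no in-paper argument to compare yours against. Your proof is the standard direct verification one would expect for this kind of ``conflict graph'' reformulation.
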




\begin{figure}[htbp]
    \centering 
\subfigure[]{\label{fig:c4thin}\includegraphics[width=40mm]{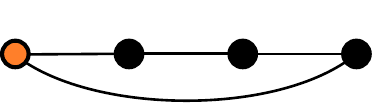}}\hfil
\subfigure[]{\label{fig:c4pthin}\includegraphics[width=40mm]{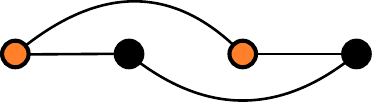}}\hfil

\caption{(a) A consistent ordering and a (b) strongly consistent ordering of $V(C_4)$, for the corresponding $2$-partitions.}
\label{fig:c4}
\end{figure}

\subsection {Precedence thinness and precedence proper thinness}

In this work, we consider a variation of the problems described in the last subsection by requiring that, given a vertex partition, the (strongly) consistent orderings hold an additional property. 

A graph $G$ is \emph{\fpExt{k}} (resp. \emph{\fppExt{k}}), or \emph{\fp{k}} (resp. \emph{\fpp{k}}), if there is a $k$-partition of its vertices and a  consistent (resp. strongly consistent) ordering $s$ for which the vertices that belong to a same part are consecutive in $s$. 
Such an ordering is called a 
\emph{precedence consistent ordering} (resp. \emph{precedence strongly consistent ordering}) for the given partition.  We define \fpPar{G} (resp. \fppPar{G}) as the minimum value $k$ for which  $G$ is \fp{k} (resp. \fpp{k}). 

The Figure~\ref{fig:C4-2ppt} illustrates a graph that is a \fpp{2} graph. The convention assumed is that the strongly consistent ordering being represented consists of the vertices ordered as they appear in the figure from bottom to top and, for vertices arranged in a same horizontal line, from left to right. Therefore, the strongly consistent ordering represented in Figure~\ref{fig:C4-2ppt} is $s =a,b,c,a',b',c'$.  The graph $C_4$ is not \fpp{2}, despite $\pthin(C_4)=2$. It can be easily verified by brute-force that, for all possible bipartitions of its vertex set and for all possible orderings $s$ in which the vertices of a same part are consecutive in $s$, the ordering and the partition are not strongly consistent. On the other hand, a \fpp{k} graph is a proper $k$-thin graph. Therefore, the class of \fpp{k} graphs is a proper subclass of that of proper $k$-thin graphs.

\begin{figure}[htbp]
    \centering 
\includegraphics[width=35mm]{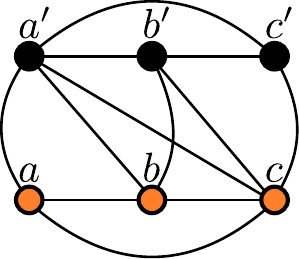}

\caption{A \fpp{2} graph.}
\label{fig:C4-2ppt}
\end{figure}

If a vertex order $s$ is given, by Theorem~\ref{thm:coloring-order}, any partition which is precedence (strongly) consistent with $s$ is a valid coloring of $G_s$ (resp. $\tilde{G}_s$) such that, additionally, the vertices on each color class are consecutive according to $s$. 
A greedy algorithm can be used to find a minimum vertex coloring with this property in polynomial time. Such method is described next, in Theorem~\ref{teo:kpptGreddy}.

\begin{theorem}\label{teo:kpptGreddy} 
Let $G$ be a graph and $s$ an ordering of $V(G)$. It is possible to obtain a minimum $k$-partition $\mathcal{V}$ of $V(G)$, in polynomial time, such that $s$ is a precedence (strongly) consistent ordering concerning  $\mathcal{V}$.
\end{theorem}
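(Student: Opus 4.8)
The plan is to reduce the problem to a sequence of independent interval-graph-coloring-type subproblems, one per ``block'' of consecutive vertices, and to argue greedily that the blocks can be chosen optimally from left to right. Fix the ordering $s = v_1, \ldots, v_n$. By Theorem~\ref{thm:coloring-order}, a partition is precedence (strongly) consistent with $s$ if and only if it is a proper coloring of $G_s$ (resp.\ $\tilde{G}_s$) whose color classes are each a set of consecutive vertices in $s$. So the task is: partition the interval $[1,n]$ of positions into the minimum number of contiguous blocks $B_1, \ldots, B_k$ (with $B_1$ the prefix ending at some position $n_1$, $B_2$ the next block, etc.) such that each $B_j$ induces an independent set in the auxiliary graph $H \in \{G_s, \tilde{G}_s\}$.

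First I would establish the structural fact that makes the greedy work: in $H = G_s$ (and likewise in $\tilde G_s$), if $(v_a, v_b) \in E(H)$ with $a < b$, then $(v_c, v_b) \in E(H)$ for every $c$ with $a < c < b$ — i.e.\ the ``left endpoint can always be pushed rightward.'' For $G_s$ this is immediate from the definition: the witness $z$ with $v_a < v_b < z$, $(z,v_a)\in E(G)$, $(z,v_b)\notin E(G)$ still serves for the pair $(v_c, v_b)$ once we note that either $(z, v_c)\in E(G)$, giving the edge directly, or $(z,v_c)\notin E(G)$, in which case $z$ witnesses $(v_c,v_b)\in E(H)$ anyway — so in all cases $(v_c,v_b)\in E(G_s)$. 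For $\tilde G_s$ one does the same case analysis also using the ``$x$ on the left'' clause. The consequence is that a contiguous block $B = \{v_i, v_{i+1}, \ldots, v_j\}$ is independent in $H$ if and only if $v_i$ has no $H$-neighbor among $v_{i+1}, \ldots, v_j$, equivalently $j < \ell(i)$ where $\ell(i)$ is the smallest index $> i$ with $(v_i, v_{v_{\ell}})\in E(H)$ (set $\ell(i) = n+1$ if none). Actually the clean statement is: $\{v_i,\dots,v_j\}$ is independent in $H$ iff for every $t$ with $i \le t \le j$, $v_t$ has no $H$-neighbor in $\{v_{t+1},\dots,v_j\}$; combined with the monotonicity above this reduces to a simple per-vertex ``reach'' quantity.

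Then the algorithm and its correctness: compute $H$ (this is $G_s$ or $\tilde G_s$, constructible in polynomial — indeed $O(n^3)$ or better — time straight from the definition), and for each $i$ compute $r(i) = $ the largest index $j \ge i$ such that $\{v_i,\dots,v_j\}$ is independent in $H$; by monotonicity $r$ is nondecreasing, and each $r(i)$ is found in polynomial time. Greedily set $n_1 = r(1)$, then $n_2 = r(n_1 + 1)$, and so on, producing blocks $B_1, B_2, \ldots$; the number of blocks produced is $k$ and the color classes $\mathcal V = (B_1, \ldots, B_k)$ form the desired partition. Optimality follows by the standard exchange argument: for any feasible block partition $B_1', \ldots, B_{k'}'$ one shows by induction that the $j$-th greedy boundary is at least as far right as the $j$-th boundary of any feasible solution (using that each $B_j'$ must be independent, hence cannot extend past $r(\text{its start})$, together with monotonicity of $r$), so $k \le k'$. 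Finally, for the ``strongly consistent'' variant one repeats everything verbatim with $\tilde G_s$ in place of $G_s$; the only thing to check is that the monotonicity lemma holds for $\tilde G_s$, which I flagged above.

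The main obstacle I expect is precisely proving the monotonicity/shift lemma for $\tilde G_s$ cleanly: the definition of $\tilde G_s$ has two clauses (a witness $z$ to the right, or a witness $x$ to the left), and when we replace the pair $(v_a, v_b)$ by $(v_c, v_b)$ with $a < c < b$ we must make sure that whichever clause produced the original edge still produces an edge for the shifted pair, handling the sub-cases of whether the witness is adjacent to $v_c$ or not. Everything else — building the auxiliary graph, the equivalence ``independent contiguous block $\Leftrightarrow$ per-vertex reachability'', and the greedy exchange argument — is routine once that lemma is in hand; it is essentially the observation that contiguous-block proper colorings of a graph with this ``interval-like'' edge-shift property behave exactly like the classical left-to-right greedy partition into a minimum number of intervals.
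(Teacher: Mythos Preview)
Your overall strategy---reduce via Theorem~\ref{thm:coloring-order} to partitioning $s$ into the fewest contiguous blocks that are independent in $H\in\{G_s,\tilde G_s\}$, then run a left-to-right greedy and justify it by an exchange argument---is correct and is essentially the same approach the paper takes (the paper phrases the same greedy as ``give each $v_i$ the current color $c$ if possible, else $c{+}1$'' and proves optimality by a short induction).

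However, your ``monotonicity/shift lemma'' is \emph{false}, and your sketch of its proof is where it breaks. Take $n=4$, $s=v_1,v_2,v_3,v_4$, and $E(G)=\{(v_1,v_4)\}$. Then $v_4$ witnesses $(v_1,v_3)\in E(G_s)$, but $(v_2,v_3)\notin E(G_s)$ since the only candidate witness $v_4$ is not adjacent to $v_2$. So the left endpoint cannot always be pushed rightward. In your case analysis, when $(z,v_c)\notin E(G)$ the vertex $z$ does \emph{not} witness $(v_c,v_b)\in E(G_s)$ (you need $(z,v_c)\in E(G)$ for that); what $z$ does witness in that case is $(v_a,v_c)\in E(G_s)$, which is a different statement. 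The same kind of counterexample kills the ``check only the leftmost vertex'' shortcut for independence of a block: with $n=5$ and $E(G)=\{(v_2,v_5)\}$ one gets $E(G_s)=\{(v_2,v_3),(v_2,v_4)\}$, so $v_1$ has no $G_s$-neighbor in $\{v_1,\dots,v_3\}$ yet that block is not independent.

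The good news is that you do not need the lemma at all. The function $r(i)=\max\{j\ge i:\{v_i,\dots,v_j\}\text{ independent in }H\}$ is nondecreasing for the trivial reason that a subset of an independent set is independent (if $\{v_i,\dots,v_{r(i)}\}$ is independent then so is $\{v_{i+1},\dots,v_{r(i)}\}$, hence $r(i{+}1)\ge r(i)$), and that alone makes your exchange argument go through verbatim. Computing each $r(i)$ in polynomial time requires no structural lemma either---just build $H$ in $O(n^3)$ and scan. So: delete the shift lemma, keep the greedy and the exchange argument, and you have a correct proof that matches the paper's.
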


\begin{proof}

Consider the following greedy algorithm that obtains an optimum coloring of $G_s$ (resp. $\tilde{G}_s$) in which vertices having a same color are consecutive in $s$. That is, $s$ is a precedence consistent (resp. precedence strongly  consistent) ordering concerning the partition defined by the coloring.
Color $v_1$ with color~$1$. For each $v_i$, $i>1$,  let $c$ be the last color used. Then color $v_i$ with color $c$ if there is no $v_j$, $j<i$, colored with $c$ such that $(v_j,v_i) \in E(G_s)$ (resp. $(v_j,v_i) \in E(\tilde{G_s})$). Otherwise, color $v_i$ with color $c+1$. We show, by induction on $|s| = n$, that the algorithm finds an optimal coloring in which each vertex has the least possible color.

The case where $n=1$ is trivial. Suppose that the algorithm obtains an optimum coloring of $G_s$ (resp. $\tilde{G_s}$), for orderings having size less than $n$. 
Remove the last vertex $v_n$ from $s$ and $G_s$ (resp. $\tilde{G_s}$) and use the given algorithm to color the resulting graph. By the induction hypothesis, the chosen coloring for $v_1,v_2,\ldots,v_{n-1}$ is optimal. Moreover, the colors are non-decreasing and each vertex is colored with the least possible color. Now, add the removed vertex $v_n$ to $s$ and to the graph $G_s$ (resp. $\tilde{G_s}$), with its respective edges, and let the algorithm choose a coloring for it. If the color of $v_n$ is equal to the color of $v_{n-1}$ the algorithm is optimal by the induction hypothesis. Otherwise, $v_n$ is colored with a new color $c'$. Suppose the chosen coloring is not optimal. That is, it is possible  to color $v$ with an existing color.  This implies  that there is at least a neighbor $v_j$ of $v_n$, in $G_s$ (resp. $\tilde{G_s}$), that can be recolored with a smaller color. This is an absurd because the algorithm has already chosen the least possible color for all the vertices of $G_s \setminus \{v_n\}$ (resp. $\tilde{G_s}\setminus \{v_n\}$), relative to $s$. 
\end{proof}



In the following sections, we will deal with the case where the vertex partition is given and the problem consists of finding the vertex ordering. From now on, we will then simply call \emph{precedence consistent ordering} (resp. \emph{precedence strongly consistent ordering}) to one that is such for the given partition. 
\unskip
\section{Precedence thinness for a given partition}\label{chpt:pthinness}

In this section, we present an efficient algorithm to \fpExt{k} graph recognition for a given partition. This algorithm uses $PQ$ trees and some related properties to validate precedence consistent orderings in a greedy fashion, iteratively choosing an appropriate ordering of the parts of the given partition that satisfies precedence consistence, if one does exist.
Formally, the problem addressed in this chapter is the following. 

\begin{flushleft}
\begin{tabular}{ l l } 
 \hline
 \textbf{\textsc{Problem}:} & \textsc{Partitioned \fp{k}} (Recognition of \fp{k} graphs for a given \\ & partition)\\ 
 \textbf{\textsc{Input}:} & A natural $k$, a graph $G$ and a partition $(V_1,\ldots,V_k)$ of $V(G)$.\\ 
 \textbf{\textsc{Question}:}  & Is there a consistent ordering $s$ of $V(G)$ such that the vertices\\ & of $V_i$ are consecutive in  $s$, for all $1 \leq i \leq k$? \\ 
 \hline
\end{tabular}
\end{flushleft}

It should be noted that a precedence consistent ordering $s$ consists of  a concatenation of the consistent orderings of $G[V_i]$, for all $1 \leq i \leq k$.  That is, $s = s_1s_2\ldots s_k$, where $s_1,s_2,\ldots,s_k$ is a permutation of $s'_1,s'_2,\ldots,s'_k$ and $s'_i$ is a canonical ordering of $G[V_i]$, for all $1 \leq i \leq k$. The following property is straightforward from the definition of a precedence consistent ordering.

\begin{property}\label{prop:thinness}
Let  $(V_1,V_2, \ldots, V_k)$ be a partition of $V(G)$, $s$ a precedence consistent ordering and  $1 \leq i , j \leq k$. If $V_i$ precedes $V_j$ in $s$, then, for all  $u,v \in V_i$ and $w \in V_j$, if $(u,w) \not\in E(G)$ and $(v,w) \in E(G)$, then  $u$ precedes $v$ in $s$. 
\end{property}

Property~\ref{prop:thinness} shows that, for any given consistent ordering $s = s_1s_2\ldots s_k$, the vertices of $s_j$ impose ordering restrictions on the vertices of $s_i$, for all $1 \leq i < j \leq k$. This relation is depicted in Figure~\ref{fig:thinness}. This property will be used as a key part of the greedy algorithm to be presented later on.

 \begin{figure}[htbp]
\centering     
\includegraphics[width=80mm]{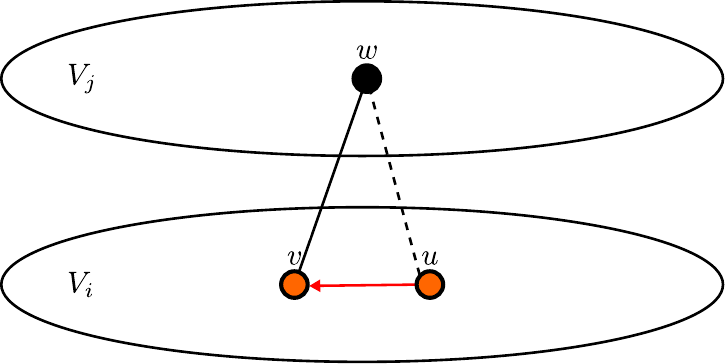}
\caption{Precedence relations among the vertices in a precedence consistent ordering.}
\label{fig:thinness}
\end{figure}

Let  $s_T= C_1,C_2, \ldots, C_q$ be a ordering of the maximal cliques of an interval graph $G$ obtained from a $PQ$ tree $T$. Recall from Section~\ref{chpt:preliminaries} that it is possible to obtain a canonical ordering $s$ ordered according to $s_T$.
Let $u,v\in V(G)$. We define $T$ as \emph{compatible with the ordering restriction} $u<v$ if there exists a canonical ordering $s$ ordered according to  $s_T$ such that $u<v$ in $s$. 
The following theorem describes 
compatibility conditions between a $PQ$ tree and an ordering restriction $u<v$.

\begin{theorem}\label{theo:thinness}
Let $G$ be an interval graph  and $T$ be a $PQ$ tree of $G$. Let $X$ be a node of $T$ with children $X_1, \ldots, X_k$ and $u,v \in X$. Denote by $T_X$ the subtree rooted at $X$. The following statements are true.

\begin{enumerate}[(i)] 

\item if $v$ belongs to all leaves of $T_X$, then $T_X$ is compatible with $u<v$ (see Figure~\ref{fig:theo:thinness:1}).

\item Let  $X_i,X_j \in \{X_1,\ldots,X_k\}$. If $u \in X_i$,  $v\not \in X_i$, $v \in X_j$ and $T_X$ is compatible with $u<v$, then $X_i$ precedes $X_j$ in $T_X$ (see Figure~\ref{fig:theo:thinness:2}). 

\end{enumerate}

\end{theorem}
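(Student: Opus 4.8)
The plan is to prove both statements by reasoning about the structure of $PQ$ trees and the way canonical orderings arise from canonical clique orderings (Lemma~\ref{lemma:clique:canonical}). Recall that a canonical ordering $s$ ordered according to $s_T$ is built by processing the cliques of $s_T$ left to right and, at each clique, outputting the simplicial vertices that become available. A vertex $v$ is output while processing clique $C_\ell$, where $C_\ell$ is the last clique in the order containing $v$; equivalently, since the set of cliques containing a fixed vertex is an interval of $s_T$, the position of $v$ in $s$ is governed by the right endpoint of that interval, with ties (vertices sharing the same last clique) broken arbitrarily. So the key observation I would isolate first is: for $u,v \in X$, one can place $u$ before $v$ in some canonical ordering compatible with $T_X$ if and only if there is an equivalent ordering of $T_X$ in which the last clique containing $u$ is no later than the last clique containing $v$ (and if they coincide, this is always achievable since the tie is broken freely).

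\textbf{Part (i).} Suppose $v$ belongs to every leaf of $T_X$. Then $v$ is contained in every clique of any ordering $s_T$ derived from $T_X$, so the last clique containing $v$ is the final clique $C_q$ of that order. Since $u \in X$, the last clique containing $u$ is some $C_\ell$ with $\ell \le q$. If $\ell < q$, then $u$ is output strictly before $v$, so $u < v$. If $\ell = q$, then $u$ and $v$ share the same last clique and we are free to output $u$ before $v$ when processing $C_q$. Either way $T_X$ is compatible with $u<v$. (I would phrase this directly via Lemma~\ref{lemma:clique:canonical} rather than re-deriving the clique-processing argument.)

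\textbf{Part (ii).} Here I argue contrapositively on the arrangement of children. Assume $u \in X_i$, $v \notin X_i$, $v \in X_j$, and that some canonical ordering $s$ ordered according to a clique ordering $s_T$ coming from $T_X$ has $u < v$. Each child $X_\ell$ contributes a contiguous block of cliques to $s_T$, and these blocks appear in $s_T$ in the left-to-right order of the children of $X$ (up to the permutation/reversal freedom, which is exactly what fixes the concrete $s_T$; if $X$ is a $Q$ node the relative order of $X_i$ and $X_j$ in $s_T$ is the same as in $T_X$ up to a global reversal, and if $X$ is a $P$ node every relative order is realizable). Since $v \notin X_i$, no clique in the block of $X_i$ contains $v$; since the cliques containing $v$ form an interval of $s_T$ lying within the blocks that contain $v$ (in particular within the block of $X_j$ and possibly adjacent blocks sharing $v$), and since $v \notin X_i$, that interval is disjoint from the $X_i$-block. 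If the $X_j$-block (hence the last clique containing $v$) preceded the $X_i$-block in $s_T$, then the last clique containing $v$ would come strictly before the last clique containing $u$ (which lies in or before the $X_i$-block... more precisely: $u \in X_i$ forces the last clique containing $u$ to lie within the $X_i$-block or to the right only if $u$ also belongs to a later child; but in the relevant comparison it suffices that $u$'s last clique is not earlier than the start of the $X_i$-block, while $v$'s last clique is strictly before it). This would force $v < u$ in every canonical ordering ordered according to $s_T$, contradicting $u < v$. Hence $X_i$ precedes $X_j$ in $T_X$.

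\textbf{Main obstacle.} The delicate point is handling the boundary cases where $u$ or $v$ belongs to more than one child of $X$ (so its clique-interval straddles block boundaries) and where $u,v$ end up with the same last clique. I would clean this up by formally stating, as a preliminary lemma, that for a node $X$ of a $PQ$ tree, a vertex $w \in X$ has the property that $\{C : w \in C\}$ is an interval of $s_T$ contained in the union of the blocks of those children of $X$ to which $w$ belongs, and that these children are consecutive among the children of $X$; this "consecutiveness" is the heart of why $PQ$ trees encode interval models, and once it is available both parts of the theorem follow from bookkeeping about which block contains the right endpoint of each vertex's clique-interval. The $P$-node versus $Q$-node distinction only affects which reorderings of the blocks are available, and in part (ii) we only need that $X_i$ and $X_j$ retain their relative order, which holds for $Q$ nodes (up to the reversal already absorbed into the choice of $s_T$) and is vacuous as a constraint for $P$ nodes — so the statement "$X_i$ precedes $X_j$ in $T_X$" should be read as: in the linear order of children of $X$ realized by the chosen $s$, which is what the proof delivers.
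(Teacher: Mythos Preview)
Your proposal is correct and follows essentially the same route as the paper: both parts hinge on the consecutive-ones property of the cliques containing a fixed vertex, and part~(ii) is argued by the same contrapositive (assume $X_j$ precedes $X_i$, derive $v<u$ in every canonical ordering compatible with $s_T$, contradiction). The only notable difference is in part~(i): the paper observes directly that $v$ belonging to all leaves of $T_X$ makes $v$ universal in $G_X$, so any canonical ordering stays canonical after moving $v$ to the last position---a one-line argument that bypasses the last-clique bookkeeping you set up. Your last-clique framework is perfectly valid and has the advantage of unifying both parts, but it is heavier than needed for~(i). For~(ii) your reasoning matches the paper's; the ``main obstacle'' you flag (vertices straddling several children) is handled in the paper simply by invoking Theorem~\ref{theo:clique:ordering}, which is exactly the consecutiveness lemma you propose to isolate.
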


\begin{proof}\leavevmode \\
Let $G_X$ be the graph induced by the union of the leaves of $T_X$. 
\begin{enumerate}[(i)] 
\item Let $s_x$ be a canonical ordering of $G_X$ and $s'_x$ the ordering obtained from $s_x$ by moving $v$ to the last position. As $v$ is an universal vertex from $G_X$, $s'_x$ is also a canonical ordering of this graph. Consequently, $T_X$ is compatible with  $u<v$.

\item  Suppose $X_j$ precedes $X_i$ in $T_X$ and $T_X$ is compatible with $u<v$. As $v \not \in X_i$, then by Theorem~\ref{theo:clique:ordering}, there is no $X_z$ such that $X_i$ precedes $X_z$  and $v  \in X_z$. Otherwise, there would exist three maximal cliques $C_i,C_j,C_z$ such that $C_j<C_i<C_z$ in the ordering of cliques represented in $T_X$ and such that $v \not \in C_i$, $v \in C_j \cap C_z$. Therefore $v<u$ in any canonical ordering $s_X$ of $G$ (Lemma~\ref{lemma:clique:canonical}),  a contradiction because $T_X$ is compatible with $u<v$. Thus, $X_i$ precedes $X_j$ in $T_X$. \qedhere
\end{enumerate} 
\end{proof}

\begin{figure}[htbp]
\centering     
    \subfigure[]{\label{fig:theo:thinness:1}\includegraphics[width=43mm]{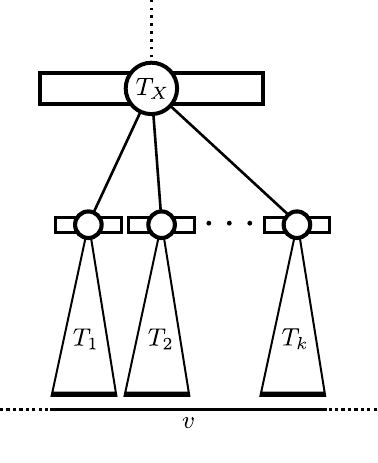}}
    \hfil
    \subfigure[]{\label{fig:theo:thinness:2}\includegraphics[width=43mm]{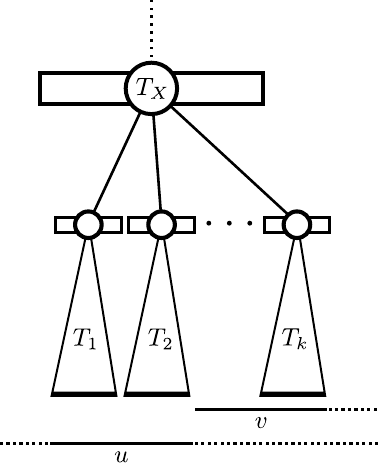}}
\caption{Ordering imposed by Theorem~\ref{theo:thinness} item $i$ ($a$) and item $ii$ ($b$).}
\end{figure}


Theorem~\ref{theo:thinness} can be used to determine the existence of a $PQ$ tree compatible with an ordering restriction $u<v$. This task can be achieved by considering as the node $X$ of Theorem~\ref{theo:thinness}, each one of the nodes of a given $PQ$ tree $T$. If $T$ violates the conditions imposed by the theorem, $T$ is ``annotated'' in a way that the set of equivalent $PQ$ trees is restricted, avoiding precisely the violations. This procedure continues until all nodes produce their respective restrictions, in which case any equivalent $PQ$ tree allowed by the ``annotated''  tree $T$ is compatible with the given ordering restrictions. If there is no equivalent $PQ$ tree to the ``annotated'' tree $T$, which means there is no way to avoid the violations, then there is no tree which is compatible with such an ordering restriction.

The general idea to ``anotate'' a $PQ$ tree is to use an auxiliary digraph to represent required precedence relations among the vertices. This digraph is constructed for each part and any topological ordering of it results in a precedence consistent ordering concerning the vertices of this part. Property~\ref{prop:thinness} is applied to determine the ordering restrictions of the vertices and Theorem~\ref{theo:thinness} to ensure that at the end of the algorithm the vertices are ordered according to a canonical clique ordering.  
Such a procedure is detailed next. 

First, the algorithm validates if each part of the partition induces an interval graph. This step can be  accomplished, for each part, in linear time~\cite{BOOTH76}. If at least one of these parts does not induce an interval graph, then the answer is \textsc{NO}. Otherwise, the algorithm tries each part as the first of a precedence consistent ordering. For each candidate part $V_i$, $1 \leq i \leq k$, it builds a digraph $D$ to represent the order conditions that the vertices of $V_i$  must satisfy in the case in which $V_i$ precedes all the other parts of the partition. That is, $V_i$ must be ordered in such a way that it is  according to a canonical clique ordering and respects the restrictions imposed by  Property~\ref{prop:thinness}. In this strategy, the vertex set of $D$ is $V_i$ and its directed edges represent the precedence relations among its vertices. Namely, $(u,v) \in E(D)$ if, and only if, $u$ must precede $v$ in all precedence consistent orderings that have $V_i$ as its first part.  
The algorithm  uses Property~\ref{prop:thinness} to find all the ordering restrictions $u < v$ among the vertices of $V_i$ imposed by the others parts, adding the related directed edges to $D$. 
Then, by building a $PQ$ tree $T$ of $G[V_i]$, Theorem~\ref{theo:thinness} is used to transform $T$ into $T'$ ensuring that all those ordering restrictions $u < v$ are satisfied. If there is a $PQ$ tree $T'$ of $G[V_i]$ that is compatible with all the ordering restrictions imposed by Property~\ref{prop:thinness}, then the algorithm adds directed edges to $D$ according to the canonical clique ordering represented by $T'$. This step is described below in the Algorithm~\ref{alg:pqtco} and it is similar to the one described in Lemma~\ref{lemma:clique:canonical}. At this point, $D$ is finally constructed and the existence of a topological ordering for its vertices determines whether $V_i$ can be chosen as the first part of a precedence consistent ordering for the given partition. If that is the case, $V_i$ is chosen as the first part and the process is repeated in $G \setminus V_i$ to choose the next part. If no part can be chosen at any step, the answer is \textsc{NO}. Otherwise, a feasible ordering of the parts and of the vertices within each part is obtained and the answer is \textsc{YES}. Next, the validation of the compatibility of $T$ is described in more detail.

\begin{algorithm}[htbp]
\textbf{Input}: $G$: an interval graph; 
        $D$: a digraph;
        $T$: a $PQ$-tree;\\
\vspace{0.20cm}
\addEdgesFromPQTree{
    Let $s_C$ be  the canonical clique ordering relative to $T$
    
    \For{\Each  $C_i \in s_C$ }{
        Let $S$ be  the set of simplicial vertices of $C_i$
        
        \For{\Each  $v \in S$ }{
            \For{\Each  $u \in C_{i+1}$}{
                $E(D) \leftarrow  E(D) \cup \{(v,u)\}$
            }
        }
        $G \leftarrow G \setminus S$
    }
}

\caption{\textsc{Adding  edges from $PQ$-tree to $D$}}
\label{alg:pqtco}
\end{algorithm}

For each imposed ordering $u<v$ (that is, $(u, v) \in E(D)$), $T$ is traversed  node by node, applying Theorem~\ref{theo:thinness}. As a consequence of such an application, if the order of the children of some node $X$ of $T$ must be changed to meet some restrictions, directed edges are inserted on $T$ to represent such needed reorderings. Those directed edges appear among nodes that are children of a same node in $T$. At the end, to validate if $T$ is compatible with all needed reorderings of children of nodes, a topological ordering is applied to the children of each node.
If there are no cycles among the children of each node, then there is an equivalent $PQ$ tree $T'$ compatible with all the restrictions. In this case, $T'$ is obtained from $T$ applying the sequence of permutations of children of $P$ nodes and reversals of children of $Q$ nodes which are compliant to the topological orderings.
Otherwise, if it is detected a cycle in some topological sorting, then there is no $PQ$ tree of $G[V_i]$ which is compatible  with all the set of restrictions. In other words, $V_i$ can not be chosen as the first part in a precedence consistent ordering for the given partition. 
Algorithm~\ref{alg:kfp} formalizes the procedure.

\begin{algorithm}[htbp]
\textbf{Input}: $G$: a graph; 
        $k$: a natural number; 
        $\mathcal{V}$: a $k$-partition $(V_1, V_2, \ldots, V_k)$ of $V(G)$;  \\
\vspace{0.20cm}
\Kpt{
    $s \leftarrow \emptyset$\\
    \For{\Each  $V_i \in \mathcal{V}$ }{
         \If{$G[V_i]$ \textsf{\upshape  is not an interval graph}}{
            \Return (\textsc{NO}, $\emptyset$)
         }
    }
    \While{$\mathcal{V} \neq \emptyset$}{
        \For{\Each $V_i \in \mathcal{V}$ }{
            $foundFirstPart$ $\leftarrow$ \TRUE \\
            Create a digraph $D = (V_i, \emptyset)$\\
            Build a $PQ$ tree $T_i$ of $G[V_i]$\\
            \For{\Each $V_j \in \mathcal{V}$ \textsf{\upshape such that } $V_j \neq V_i$}{
                Let $S$ be the set of precedence relations among the vertices of $V_i$ concerning $V_j$ (Property~\ref{prop:thinness})\\
                \For{\Each $(u<v) \in S$}{
                   $E(D) \leftarrow  E(D) \cup \{(u,v)\}$ \\
                    \For{\Each \textsf{\upshape node $X$  of $T_i$}}{
                        Add the direct edges, deriving from $(u<v)$, among the children of $X$ (Theorem~\ref{theo:thinness}) \\
                    }
                }
                \For{\Each \textsf{\upshape node} $X$ \textsf{\upshape of} $T_i$}{
                        Let $D_X = (V', E')$ be the digraph where $V'$ is the set of the children of $X$ and  $E'$ are the directed edges added among them\\
                        \If{\textsf{\upshape there is a topological ordering $s_X$ of $D_X$}}{
                             Arrange the children of $X$ according to $s_X$ \\
                        }
                        \Else{
                            $foundFirstPart \leftarrow$ \FALSE \\
                        }
                    }
            }
            \If{ foundFirstPart}{
                \AddEdgesFromPQTree{$G[V_i]$, $D$, $T_i$}\\
                \If{\textsf{\upshape there is a topological ordering $s_i$ of $D$}}{
                    $s \leftarrow ss_i$\\
                    $\mathcal{V} \leftarrow \mathcal{V} \setminus V_i$\\
                    \Break
                }
                \Else{$foundFirstPart \leftarrow$ \FALSE \\}
            }
        }
        \If{ \Not $foundFirstPart$}{
            \Return (\textsc{NO}, $\emptyset$)
        }
    }
    \Return (\textsc{YES}, $s$)
}
\caption{\textsc{Partitioned \fp{k}}}
\label{alg:kfp}
\end{algorithm}


To illustrate the execution of Algorithm~\ref{alg:kfp}, consider the graph $G$ as defined in Figure~\ref{fig:thinEx} and the $3$-partition $\mathcal{V} = (V_1,V_2,V_3)$ of $V(G)$ where $V_1 = \{a,b,c,d,e,f,g,h,i,j,k,l\}$, $V_2 = \{a',b',c',d',e',f',g,h',i',j',k',l'\}$ e $V_3 = \{a'',$ $b'',c'',d'',e'',f'',g'',h'',i'',j'',k'',l''\}$. For the sake of clearness, in Figure~\ref{fig:thinEx}, the edges with endpoints in distinct parts are depicted in black, the edges with endpoints in a same part are in light gray and the vertices belonging to distinct parts are represented with different colors. Moreover, the vertices of each part, read from the left to right, consist of a canonical ordering of the graph induced by that part.
Each  part of $\mathcal{V}$ induces the interval graph $G'$ depicted in Figure~\ref{fig:g1:graph}. Figure~\ref{fig:g1:model} depicts a model of $G'$. In this model, all maximal cliques are represented by vertical lines. Figure~\ref{fig:g1:PQT} represents a $PQ$ tree of $G'$ in which each maximal clique is labeled  according to the model in Figure~\ref{fig:g1:model}.

\begin{figure}[htbp]
    \centering 
    \includegraphics[width=90mm]{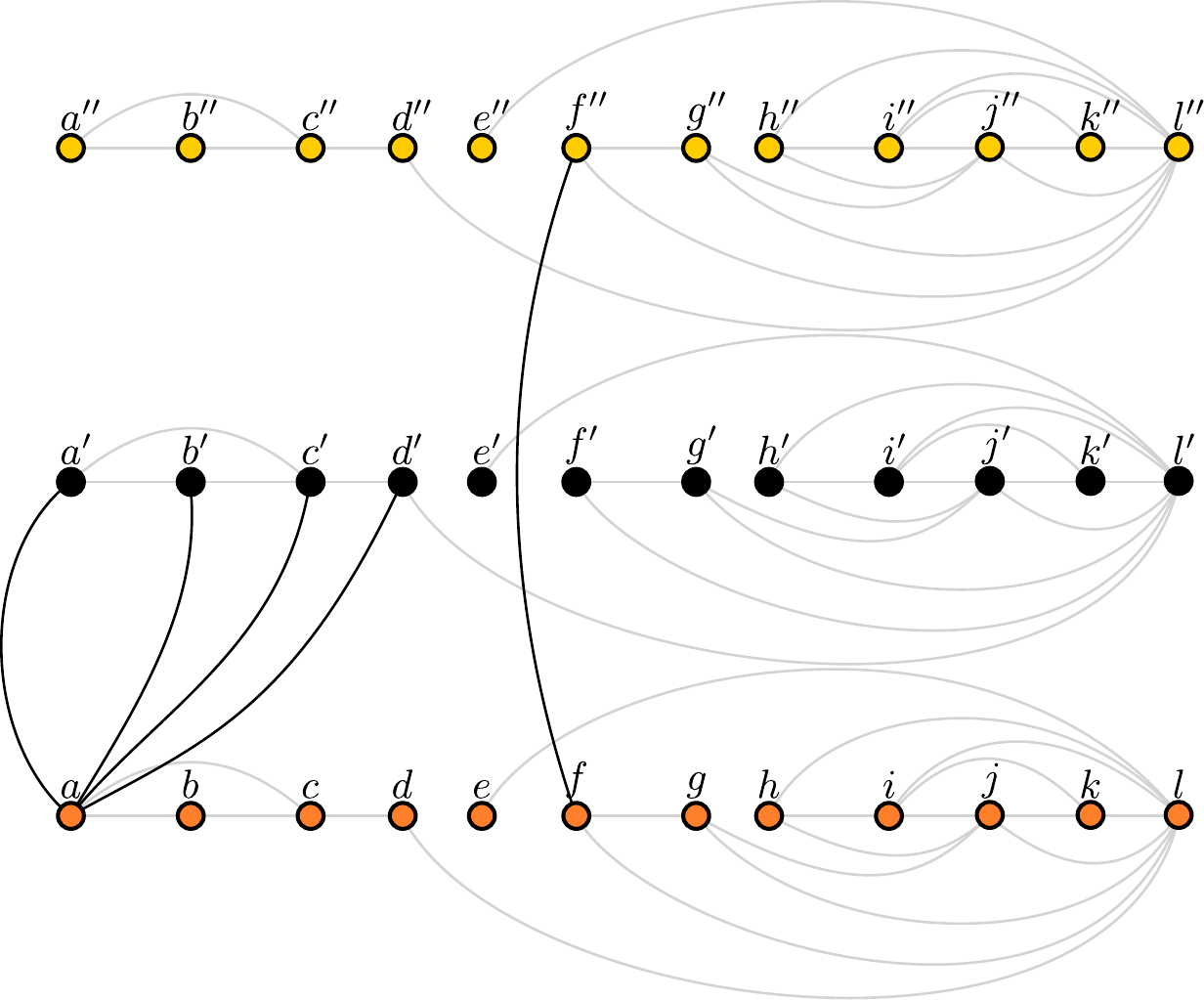}
\caption{A graph $G$ and a $3$-partition $\mathcal{V} = (V_1,V_2,V_3)$ of its vertices where  $V_1 = \{a,b,c,d,e,f,g,h,i,j,k,l\}$, $V_2 = \{a',b',c',d',e',f',g,h',i',j',k',l'\}$ e $V_3 = \{a'',$ $b'',c'',d'',e'',f'',g'',h'',i'',j'',k'',l''\}$.}
\label{fig:thinEx}
\end{figure}



\begin{figure}[htb]
    \centering 
    \subfigure[]{\label{fig:g1:graph}\includegraphics[width=80mm]{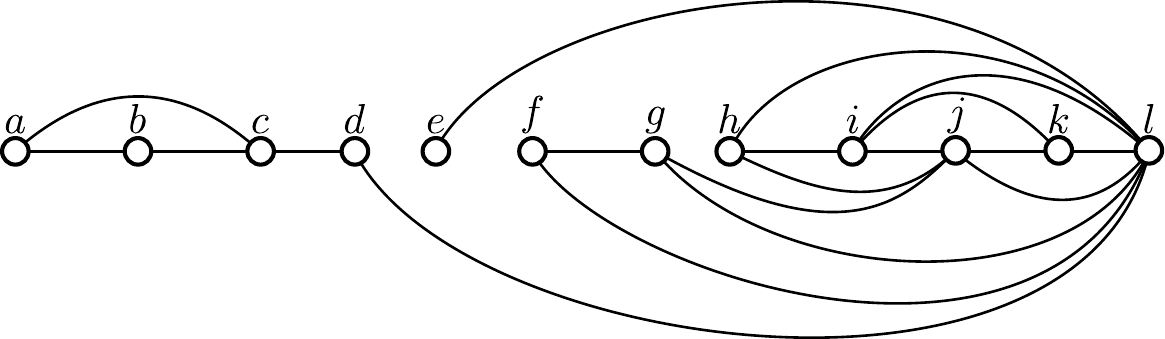}}\\
    \subfigure[]{\label{fig:g1:model}\includegraphics[width=90mm]{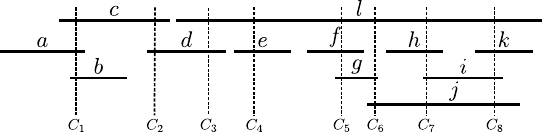}}\\
    \subfigure[]{\label{fig:g1:PQT}\includegraphics[width=44mm]{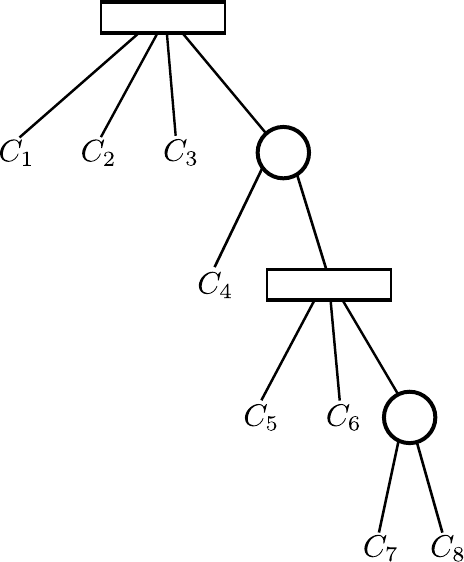}}
\caption{An interval graph $G'$ ($a$); an interval model ($b$)  and a $PQ$ tree (c) of $G$.}

\end{figure}

Suppose that, at the first step, the algorithm tries to choose $V_1$ as the first part of the precedence consistent ordering. As mentioned,  $G[V_1]\cong G'$ and, according to the model in Figure~\ref{fig:g1:model}, $G[V_1]$ has maximal cliques $C_1 = \{a,b,c\}$, $C_2=\{c,d\}$, $C_3 = \{d,l\}$, $C_4 = \{e,l\}$, $C_5 = \{f,g,l\}$, $C_6 = \{g,j,l\}$, $C_7 = \{h,i,j,l\}$ and $C_8 = \{i,j,k,l\}$. Concerning the edges between $V_1$ and $V_2$ and according to Property~\ref{prop:thinness}, the vertex $a$ of $V_1$ must succeed all the other vertices of this part in any valid canonical ordering. This requirement is translated into the corresponding $PQ$ tree through directed edges as depicted in Figure~\ref{fig:thinEX:PQT:v1v2}. Let $X$ be the current node of $T$. Note that if $X$ is a $Q$ node, then any imposed ordering of a pair of its children implies in  ordering all of them. In Figure~\ref{fig:thinEX:PQT:v1v2}, the oriented edges deriving from  Property~\ref{prop:thinness}  are represented in blue and the edges deriving from the orientation demanded by $Q$ nodes, due to the presence of the blue ones, are represented in orange. Clearly, there is a valid $PQ$ tree that satisfies such orientations. 
Now the algorithm adds the directed edges to the tree deriving from fact that  $V_1$ must also precede $V_3$. Considering the edges between $V_1$ and $V_3$, and according to Property~\ref{prop:thinness}, the vertex $f$ of $V_1$ must succeed all the other vertices of $V_1$ in any valid canonical ordering. This requirement is translated into the $PQ$ tree in Figure~\ref{fig:thinEX:PQT:v1v2}, resulting in  the $PQ$ tree in Figure~\ref{fig:thinEX:PQT:v1v3v1v2}. Clearly, no  $PQ$ tree can  satisfy the given orientation due to the directed cycle at the first level of the tree. Then, $V_1$ can not precede both $V_2$ and $V_3$.



\begin{figure}[htb]
    \centering 
\subfigure[]{\label{fig:thinEX:PQT:v1v2}\includegraphics[width=44mm]{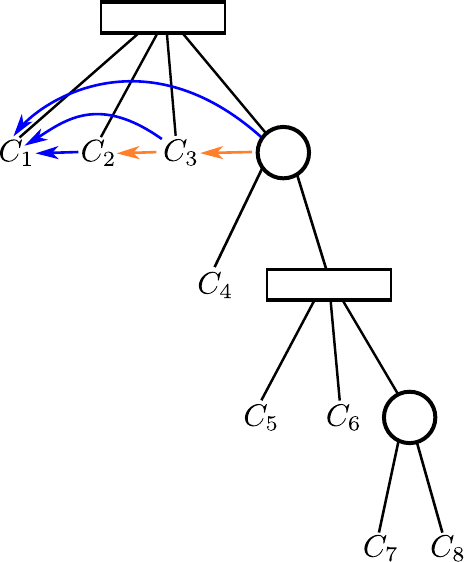}}\hfill
\subfigure[]{\label{fig:thinEX:PQT:v1v3v1v2}\includegraphics[width=44mm]{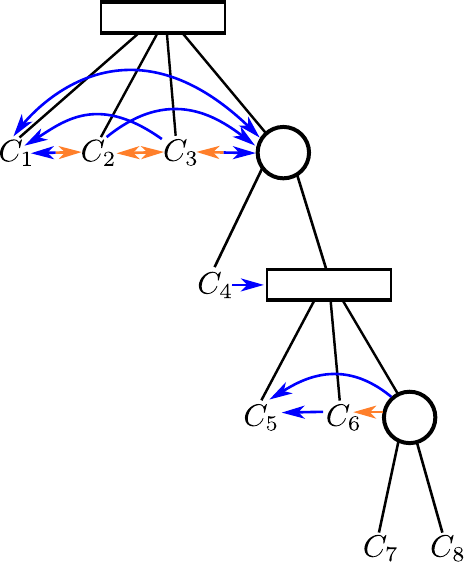}} \\
\subfigure[]{\label{fig:thinEX:PQT:v2v1}\includegraphics[width=44mm]{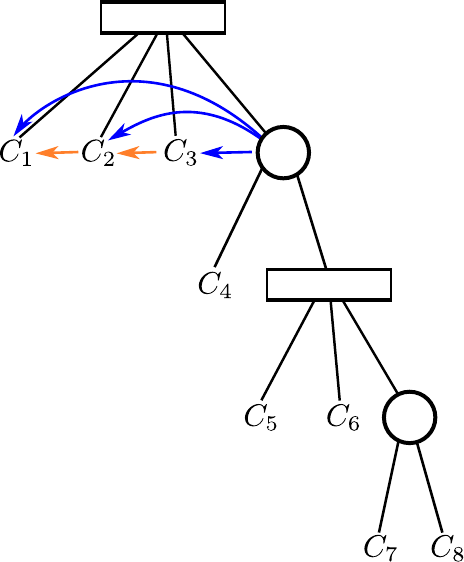}}\hfill  \subfigure[]{\label{fig:thinEX:PQT:v1v3}\includegraphics[width=44mm]{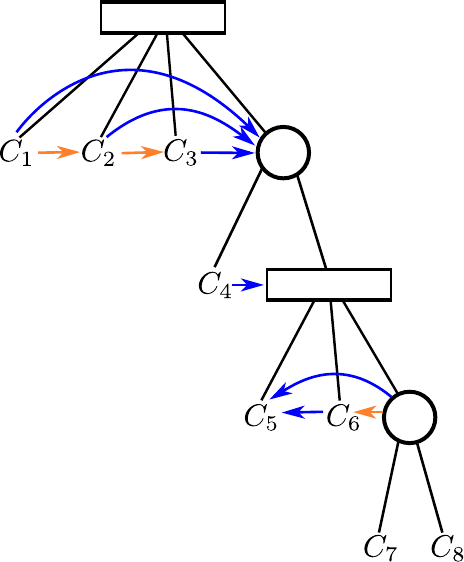}}

\caption{The edges added to the $PQ$ tree of Figure~\ref{fig:g1:PQT} through the example of execution of the Algorithm~\ref{alg:kfp}.}

\end{figure}

\begin{figure}[htb]
    \centering 
    \subfigure[]{\label{fig:v2:d}\includegraphics[width=90mm]{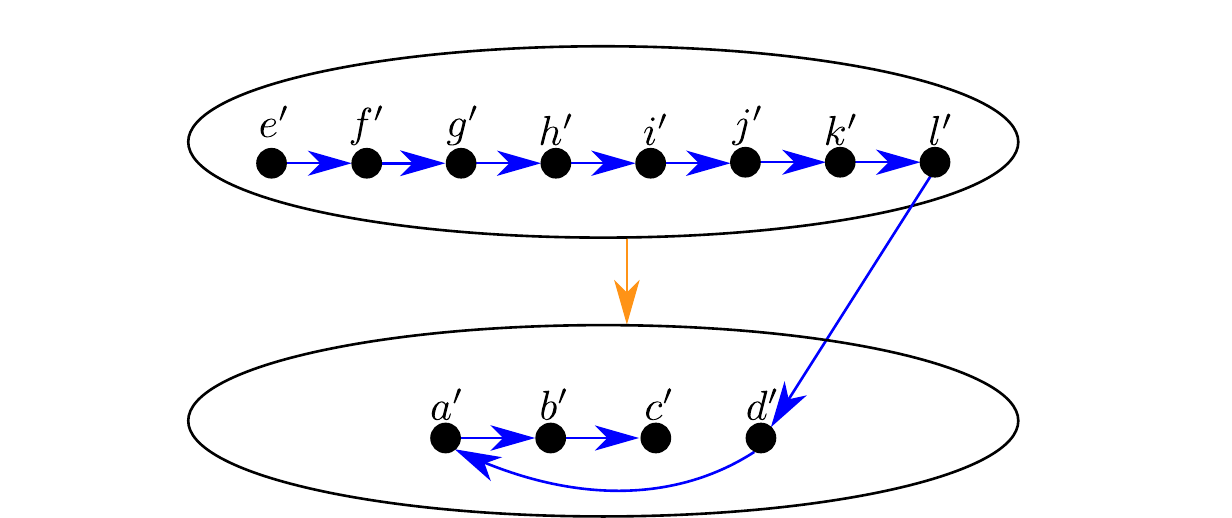}}
    \subfigure[]{\label{fig:v1:d}\includegraphics[width=90mm]{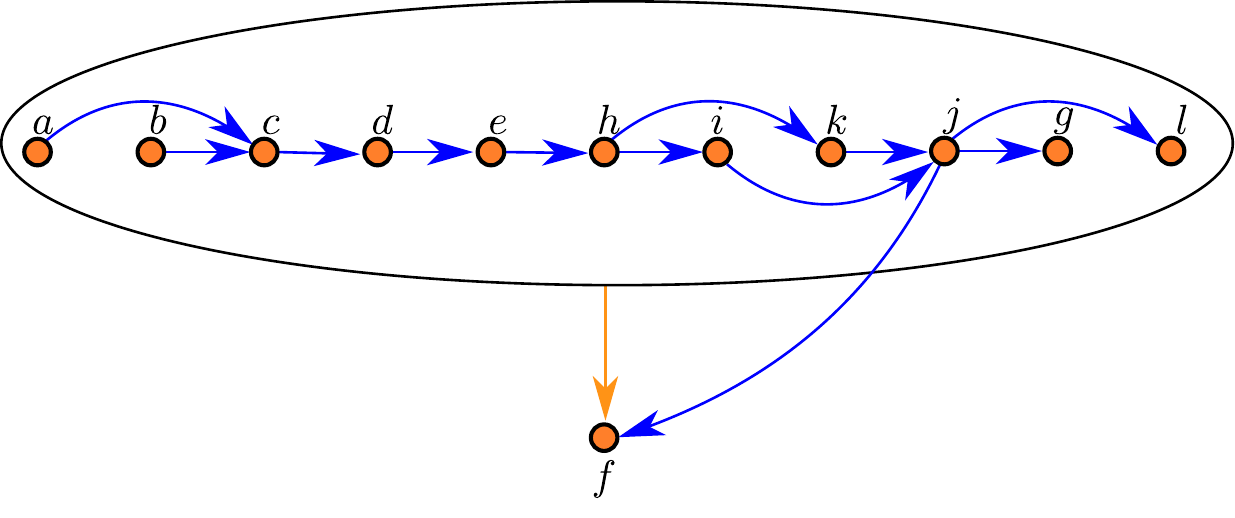}}\\
\caption{Related digraph $D$ when $V_2$ is chosen as the first part ($a$) and when $V_1$ is chosen as the second part ($b$) in Algorithm~\ref{alg:kfp}.}

\end{figure}
As $V_1$ can not be chosen as the first part, the algorithm tries another part as the first of the precedence consistent ordering. Suppose it now chooses $V_2$ as the first part. The interval graph $G[V_2]$ has as maximal cliques $C_1 = \{a',b',c'\}$, $C_2=\{c',d'\}$, $C_3 = \{d',l'\}$, $C_4 = \{e',l'\}$, $C_5 = \{f',g',l'\}$, $C_6 = \{g',j',l'\}$, $C_7 = \{h',i',j',l'\}$ and $C_8 = \{i',j',k',l'\}$. Note that, as there are no edges between $V_2$ and $V_3$, $V_2$ can precede $V_3$ in any valid consistent ordering. The algorithm must decide whether $V_2$ can precede $V_1$. According to the Figure~\ref{fig:thinEx} and  Property~\ref{prop:thinness}, the vertices $\{a',b',c',d'\}$ of $V_2$ must succeed all the other vertices of the same part  in any valid consistent ordering.
This requirement is again translated into directed edges in the $PQ$ tree of $V_2$  as depicted in Figure~\ref{fig:thinEX:PQT:v2v1}. Clearly, there is a $PQ$ tree $T'$ satisfying those directed edges. Then, the algorithm adds the edges deriving from the canonical clique ordering represented by $T'$ to $D$. Figure~\ref{fig:v2:d} depicts the final state of $D$ once the necessary edges has been added. In this figure, the edges deriving from Property~\ref{prop:thinness} are presented with an orange color and the edges related to $T'$ are presented in a blue color. For readability, in these figures the edges that can be obtained by transitivity are omitted. As there no cycles in $D$, $V_2$ can precede  $V_1$ and $V_3$. A topological ordering of $D$ leads to the canonical ordering $s =e',f',g',h',i',j',k',l',d',a',b',c'$ of $G[V_2]$.

After deciding $V_2$ as the first part, the algorithm uses the same process to choose the second part. Suppose it tries $V_1$ as the second part. Figure~\ref{fig:thinEX:PQT:v1v3} depicts the edges added to the $PQ$ tree of $G[V_1]$. As there are no cycles in the edges added, there is a tree which is compatible with the precedence relations associated with $V_1<V_3$. 
Figure~\ref{fig:v1:d} depicts the final state of the digraph $D$ related to $V_1$. As there are no cycles in $D$, $V_1$ can precede $V_3$, so the algorithm chooses it as the second part.  
Finally, the algorithm chooses $V_3$ as the last part and determine that there is a precedence consistent ordering such that $V_2<V_1<V_3$.

Concerning the complexity of the given strategy, each time  one of the $k$ parts is tried to be the first, we build a new digraph $D$, a new  $PQ$ tree $T$ 
and obtain the precedence relations according to Property~\ref{prop:thinness}. Enumerating all the precedence relations requires at most $O(n^3)$ steps, which is the time that it takes to  iterate over all triples of vertices of the graph. Moreover, each one of these relations  must be mapped to $D$, which takes time $O(1)$, and $T$. First, note that the number of nodes of $T$ is asymptotically bounded by its number of leaves,  that is, by the number of maximal cliques of the part being processed. As the number of maximal cliques is bounded by the number of vertices of the given part, the  number of nodes of $T$ is $O(n)$. Consequently, it is possible to model a precedence relation  of type $u<v$ into $T$, using Theorem~\ref{theo:thinness}, in time $O(n^2)$. To achieve this, first $T$ is traversed, in order to decide which nodes contain (resp. not contain) $u$ and $v$. A traversal of $T$ can be done in time $O(n)$, and $T$ can be constructed in $O(n+m)$ time. Additional steps will be necessary and generate new traversals in $T$ following the tree levels, with the purpose to add the necessary directed edges among the vertices that are children of the same node. This step can be done in $$\sum_{v \in V(T)}{ d^2(v)} \leq  \sum_{v \in V(T)}{d(v)|V(T)|} = O(|E(T)| |V(T)|) = O(|V(T)|^2) = O(n^2)$$
\noindent as $|E(T)| = O(|V(T)|)$ and $|V(T)| = O(n)$. 
Aiming to verify the existence of a compatible tree, the algorithm applies a topological ordering to the children of each node of $T$, which takes overall time $O(n(n+m))$. Then, the ordering in $T$ is translated to $D$ through directed edges. By using the Algorithm~\ref{alg:pqtco}, this step requires no more than $O(n^2+m)$ operations. 
Finally, a topological ordering is applied to $D$.
Thus, the algorithm has $$O(k^2(n+m+n^3n^2+n^2+nm+n^2+m+n+m))  = O(k^2n^5)$$ time complexity.

    





\unskip
\section{Precedence Proper Thinness for a Given Partition}\label{chpt:ppthinness}

In this section, we discuss precedence proper thinness for a given partition. First, we prove that this problem is \NP-complete for an arbitrary number of parts. Then, we propose a polynomial time algorithm for a fixed number of parts based on the one presented in Section~\ref{chpt:pthinness}.
Formally, we will prove that the following problem is \NP-complete.

\begin{flushleft}
\begin{tabular}{ l l } 
 \hline
 \textbf{\textsc{Problem}:} & \textsc{Partitioned \fpp{k}} (Recognition of \fpp{k}  graphs for a given\\ & partition)\\
 \textbf{\textsc{Input}:} & A natural $k$, a graph $G$ and a partition $(V_1,\ldots,V_k)$ of $V(G)$.\\ 
 \textbf{\textsc{Question}:} & Is there a strongly consistent ordering $s$ of $V(G)$ such that  the\\& vertices of $V_i$ are consecutive in  $s$, for all $1 \leq i \leq k$?\\ 
 \hline
\end{tabular}
\end{flushleft}

The \NP-hardness of the previous problem is accomplished by a reduction from the problem  \textsc{Not all equal 3-SAT}, which is \NP-complete~\cite{Schaefer1978b}. The details are described in Theorem~\ref{theo:npc:ppthinness}.  

\begin{flushleft}
    \begin{tabular}{ l l } 
        \hline
        \textbf{\textsc{Problem}:} & \textsc{Not all equal 3-SAT}\\ 
        \textbf{\textsc{Input}:} & A formula $\varphi$ on variables $x_1,\dots, x_r$ in conjunctive normal form,\\&  with clauses  $\mC_1, \dots, \mC_s$, where each clause has exactly three lite-\\&rals.\\
        \textbf{\textsc{Question}:} & Is there a truth assignment for $x_1,\dots, x_r$ such that each clause \\&$\mC_i$,  $i = 1, \dots, s$, has at least one true literal and at least one fal-\\&se literal?
        \\ 
        \hline
    \end{tabular}
\end{flushleft}

\begin{theorem}\label{theo:npc:ppthinness}
Recognition of \fpp{k} graphs for a given partition is
\NP-complete, even if the size of
each part is at most 2. 
\end{theorem}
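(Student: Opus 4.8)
The plan is to reduce from \textsc{Not all equal 3-SAT}. Given a formula $\varphi$ with variables $x_1,\dots,x_r$ and clauses $\mC_1,\dots,\mC_s$, I would build a graph $G$ together with a partition of $V(G)$ into parts of size at most $2$, so that $G$ with this partition is \fpp{k} (for the appropriate $k$, namely roughly the number of parts) if and only if $\varphi$ has a not-all-equal truth assignment. The core idea is to exploit the fact that, for a part $V_i=\{u_i,v_i\}$ of size $2$, a precedence strongly consistent ordering forces a \emph{choice} of which of $u_i,v_i$ comes first within the block occupied by $V_i$; this binary choice is what will encode the truth value of a variable. For each variable $x_j$ I would introduce a variable gadget (one or a few size-$2$ parts) whose internal order is read as ``true'' or ``false'', and I would add edges to other parts so that, via the strong consistency condition together with Theorem~\ref{thm:coloring-order} (a partition is strongly consistent with $s$ iff it is a valid coloring of $\tilde G_s$) and Property-type precedence constraints analogous to Property~\ref{prop:thinness}, all occurrences of $x_j$ are forced to agree with that choice.

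Next I would design a clause gadget for each $\mC_i=(\ell_{i1}\vee\ell_{i2}\vee\ell_{i3})$: a constant number of size-$2$ parts wired to the three corresponding literal-occurrence parts in such a way that a precedence strongly consistent ordering of the whole graph exists precisely when the three literals are not all assigned the same value. Concretely, I expect to force the three literal blocks of a clause into a common ``window'' of the order and to place an edge pattern such that, by the forbidden configuration underlying strong consistency (the two-sided version: no triple $(p,q,r)$ with $q,r$ in one part, $(p,r)\in E$, $(p,q)\notin E$, nor its mirror), having all three literal orders equal creates an unavoidable monochromatic edge in $\tilde G_s$ inside one of the clause parts, hence no valid ordering; whereas any ``mixed'' assignment admits a consistent interleaving. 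The number of parts $k$ is then $r$ (or a constant multiple) plus a constant per clause, and each part has size $\le 2$, giving the claimed strengthening. Membership in \NP{} is immediate: a certificate is the ordering $s$ together with the block decomposition, and checking strong consistency plus the consecutiveness of each part takes polynomial (indeed $O(n^3)$) time by iterating over triples.

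The main obstacle, and where I would spend the most care, is the \emph{consistency of the gadgets}: I must ensure (a) that the variable gadget genuinely forces a global value — i.e. there is no ``cheating'' ordering in which two occurrences of the same variable disagree — and (b) that the clause gadget is \emph{satisfiable in isolation} for every not-all-equal pattern and \emph{unsatisfiable} for the two all-equal patterns, robustly against the freedom to permute whole parts and to reorder vertices inside unrelated parts. A clean way to get control is to attach, to each literal-occurrence part, a private pair of ``pendant'' parts that rigidly fix the relative order of that part against the clause window using Property~\ref{prop:thinness}-style precedence forcing (if $V_i$ precedes $V_j$, then a non-neighbor of $w\in V_j$ must precede a neighbor of $w$ inside $V_i$), so that the only remaining degree of freedom is the intended binary one. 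I would prove the equivalence by two lemmas: a ``soundness'' lemma turning a strongly consistent ordering into a not-all-equal assignment by reading off block orders, and a ``completeness'' lemma explicitly constructing, from any not-all-equal assignment, an ordering of the parts and of the vertices within each part and verifying — via Theorem~\ref{thm:coloring-order} — that it is strongly consistent. The bookkeeping in the completeness direction (writing down one global order and checking no bad triple occurs) is the most laborious part, but it is routine once the gadgets are pinned down.
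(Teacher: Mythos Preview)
Your high-level plan matches the paper's approach exactly: reduce from \textsc{Not all equal 3-SAT}, encode each literal occurrence by a size-$2$ part whose internal order is the truth value, add auxiliary parts that force all occurrences of a variable to agree, and attach to each clause a gadget that admits a precedence strongly consistent ordering iff the three literals are not all equal. Membership in \NP{} is also argued the same way.

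The proposal, however, stops precisely where the actual work lies: you give the \emph{specification} of the clause gadget (``satisfiable for every not-all-equal pattern, unsatisfiable for the two all-equal patterns, robust against permuting whole parts'') but no construction, and that construction is the entire technical content of the theorem. In the paper, variable consistency is achieved through two singleton parts $X_i^T=\{x_i^T\}$ and $X_i^F=\{x_i^F\}$ with edges $(x_i^T,x_{ij}^T)$ and $(x_i^F,x_{ij}^F)$, which pin every occurrence part $X_{ij}$ between them with the same internal order. The clause gadget for $\mC_j$ then introduces three further size-$2$ parts $Y_{1j},Y_{2j},Y_{3j}$ and a hand-crafted set of fifteen edges to the three ordered occurrence parts $O_{1j},O_{2j},O_{3j}$; an eight-item case analysis of the admissible relative positions shows that the surviving global orders correspond exactly to the six not-all-equal truth patterns and to nothing else. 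Your heuristic of ``forcing the three literal blocks into a common window'' is not what actually happens (the $Y$-parts interleave \emph{between} the $O$-parts in a chain $O_{1j},Y,O_{2j},Y,O_{3j}$), and nothing in the outline indicates how one simultaneously rules out both all-true and all-false while admitting all six mixed patterns. Until such a gadget is written down and its case analysis carried out, what you have is a correct strategy but not yet a proof.
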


\begin{proof} 
A given precedence strongly consistent ordering for the partition of $V(G)$ can be easily verified in polynomial time. Therefore, this problem is in \NP.

Given an instance $\varphi$ of \textsc{Not all equal 3-SAT}, we
define a graph $G$ and a partition of $V(G)$ in which each part has size at most two. The graph $G$ is defined in such a way that $\varphi$ is satisfiable if, and only if, there is a precedence strongly consistent ordering of  $V(G)$ for the partition. The graph $G$ is constructed as follows.

For each variable $x_i$ appearing in the clause $\mC_j$, create
the  part $$X_{ij} = \{x_{ij}^T,x_{ij}^F\}$$

For each variable $x_i$, create the  parts $$X_{i}^T =
\{x_{i}^T\} \mbox{ and } X_{i}^F = \{x_{i}^F\}$$

The edges of the graph between these parts are $(x_i^T, x_{ij}^T)$
and $(x_i^F, x_{ij}^F)$ for every $i,j$ such that variable $x_i$
appears in clause $\mC_j$.

Notice that in any strongly consistent ordering, part $X_{ij}$ must be between parts $X_i^F$ and $X_i^T$. Moreover, if $x_i^F < x_i^T$,
then $x_{ij}^F < x_{ij}^T$, and conversely. In particular, in any valid vertex order, for each $i \in \{1, \dots, r\}$, either
$x_{ij}^F < x_{ij}^T$ for every $j \in \{1, \dots, s\}$ or
$x_{ij}^T < x_{ij}^F$ for every $j \in \{1, \dots, s\}$.

The \textsc{Partitioned \fpp{k}} instance will be such that if there is a precedence strongly consistent ordering for the vertices with respect to the given parts, then the assignment $x_i = (x_i^F < x_i^T)$ (that is, $x_i$ is true if $x_i^F$ precedes $x_i^T$ in such an ordering and $x_i$ is false otherwise) satisfies $\varphi$ in the context of \textsc{Not all equal 3-SAT} and, conversely, if there is a truth assignment satisfying $\varphi$ in that context, then there exists a strongly consistent ordering for the \textsc{Partitioned \fpp{k}} instance in which $x_i^F < x_i^T$ if $x_i$ is true and $x_i^T < x_i^F$ otherwise.

In what follows, if the $k$-th literal $\ell_{ij}$ of $\mC_j$ is the
variable $x_i$ (resp. $\neg x_i$), we denote by $O_{ij}$ the
ordered part $\{x_{ij}^F,x_{ij}^T\}$ (resp.
$\{x_{ij}^T,x_{ij}^F\}$).

Given a 2-vertex ordered part $C$, we denote by $C^1$ and $C^2$
the first and second elements of $C$. By $\pm C$, we denote
``either $C$ or $\bar C$''.

For each clause $\mC_j = \ell_{1j} \vee \ell_{2j} \vee \ell_{3j}$, we add
the 2-vertex ordered parts $Y_{1j}$, $Y_{2j}$, and $Y_{3j}$, and
the edges $(O_{1j}^2,Y_{1j}^1)$, $(O_{1j}^1,Y_{2j}^1)$,
$(O_{1j}^2,Y_{2j}^1)$, $(O_{2j}^1,Y_{1j}^2)$,
$(O_{2j}^1,Y_{1j}^1)$, $(O_{2j}^2,Y_{2j}^1)$,
$(O_{2j}^2,Y_{2j}^2)$, $(O_{2j}^1,Y_{3j}^1)$,
$(O_{2j}^1,Y_{3j}^2)$, $(O_{3j}^1,Y_{1j}^2)$,
$(O_{3j}^2,Y_{1j}^2)$, $(O_{3j}^1,Y_{2j}^2)$,
$(O_{3j}^2,Y_{2j}^2)$, $(O_{3j}^2,Y_{3j}^1)$,
$(O_{3j}^2,Y_{3j}^2)$. These edges ensure the following properties
in every strongly consistent ordering of the graph with respect to the defined partition.

\begin{enumerate}

\item\label{it:1} Since $(O_{1j}^2,Y_{1j}^1)$ is the only edge
between $O_{1j}$ and $Y_{1j}$, their only possible relative
positions are  $O_{1j}<Y_{1j}$ and its reverse $\bar Y_{1j}<\bar O_{1j}$.

\item\label{it:2} Since $(O_{1j}^1,Y_{2j}^1)$ and
$(O_{1j}^2,Y_{2j}^1)$ are the edges between $O_{1j}$ and $Y_{2j}$,
their possible relative positions are $\pm O_{1j}<Y_{2j}$ and
$\bar Y_{2j}<\pm O_{1j}$.

\item\label{it:3} Since $(O_{2j}^1,Y_{1j}^1)$ and
$(O_{2j}^1,Y_{1j}^2)$ are the edges between $O_{2j}$ and $Y_{1j}$,
their possible relative positions are $\bar O_{2j}<\pm Y_{1j}$
and $\pm Y_{1j}<O_{2j}$.

\item\label{it:4} Since $(O_{2j}^2,Y_{2j}^1)$ and
$(O_{2j}^2,Y_{2j}^2)$ are the edges between $O_{2j}$ and $Y_{2j}$,
their possible relative positions are $O_{2j}<\pm Y_{2j}$ and
$\pm Y_{2j}<\bar O_{2j}$.

\item\label{it:5} Since $(O_{2j}^1,Y_{3j}^1)$ and
$(O_{2j}^1,Y_{3j}^2)$ are the edges between $O_{2j}$ and $Y_{3j}$,
their possible relative positions are $\bar O_{2j}<\pm Y_{3j}$
and $\pm Y_{3j}<O_{2j}$.

\item\label{it:6} Since $(O_{3j}^1,Y_{1j}^2)$ and
$(O_{3j}^2,Y_{1j}^2)$ are the edges between $O_{3j}$ and $Y_{1j}$,
their possible relative positions are $\pm O_{3j}<\bar Y_{1j}$
and $Y_{1j}<\pm O_{3j}$.

\item\label{it:7} Since $(O_{3j}^1,Y_{2j}^2)$ and
$(O_{3j}^2,Y_{2j}^2)$ are the edges between $O_{3j}$ and $Y_{2j}$,
their possible relative positions are $\pm O_{3j}<\bar Y_{2j}$
and $Y_{2j}<\pm O_{3j}$.

\item\label{it:8} Since $(O_{3j}^2,Y_{3j}^1)$ and
$(O_{3j}^2,Y_{3j}^2)$ are the edges between $O_{3j}$ and $Y_{3j}$,
their possible relative positions are $O_{3j}<\pm Y_{3j}$ and
$\pm Y_{3j}<\bar O_{3j}$.
\end{enumerate}

Notice that, by items \ref{it:1} and \ref{it:6} (resp. \ref{it:2}
and \ref{it:7}), the vertices of $Y_{1j}$  (resp. $Y_{2j}$) are
forced to lie between those of $O_{1j}$ and those of $O_{3j}$. More
precisely, the possible valid orders are $O_{1j}<Y_{1j},Y_{2j}<\pm O_{3j}$ and their reverses $\pm
O_{3j}<\bar Y_{2j} , \bar Y_{1j}<\bar O_{1j}$.

By items \ref{it:3} and \ref{it:4}, the vertices of $O_{2j}$ are
forced to be between those of $Y_{1j}$ and those of $Y_{2j}$. More
precisely, the possible valid orders are $\pm Y_{1j}<O_{2j}<\pm Y_{2j}$ and their reverses $\pm Y_{2j}<\bar O_{2j}<\pm
Y_{1j}$.

By items \ref{it:3} and \ref{it:5}, the vertices of $Y_{1j}$ and
$Y_{3j}$ are forced to be on the same side with respect to the
vertices of $O_{2j}$, either $\bar O_{2j}<\pm Y_{1j},\pm Y_{3j}$ or $\pm Y_{1j}, \pm Y_{3j}<O_{2j}$.

Hence, taking also into account item \ref{it:8}, the possible
valid orders are
\begin{itemize}
\item $O_{1j}<Y_{1j}, \pm Y_{3j} < O_{2j}<Y_{2j}<\bar O_{3j}$

\item $O_{1j}<Y_{2j}<\bar O_{2j}<Y_{1j} , \pm
Y_{3j}<\bar O_{3j}$

\item $O_{1j}<Y_{2j}<\bar O_{2j}<Y_{1j}<O_{3j}<\pm
Y_{3j}$
\end{itemize}
and their reverses,

\begin{itemize}
\item $O_{3j}<\bar Y_{2j}<\bar O_{2j}<\bar Y_{1j}
, \pm Y_{3j}<\bar O_{1j}$

\item $O_{3j}<\bar Y_{1j} , \pm Y_{3j}<O_{2j}<\bar Y_{2j}<\bar O_{1j}$

\item $\pm Y_{3j}<\bar O_{3j}<\bar Y_{1j}<O_{2j}<\bar
Y_{2j}<\bar O_{1j}$
\end{itemize}
and will correspond to truth assignments that make true, respectively,

\begin{enumerate}[a)]
\item $\ell_{1j} \wedge \ell_{j2} \wedge \neg \ell_{3j}$

\item $\ell_{1j} \wedge \neg \ell_{2j} \wedge \neg \ell_{3j}$

\item $\ell_{1j} \wedge \neg \ell_{2j} \wedge \ell_{3j}$

\item $\neg \ell_{1j} \wedge \neg \ell_{2j} \wedge \ell_{3j}$

\item $\neg \ell_{1j} \wedge \ell_{2j} \wedge \ell_{3j}$

\item $\neg \ell_{1j} \wedge \ell_{2j} \wedge \neg \ell_{3j}$

\end{enumerate}

Suppose first that there is a precedence strongly consistent ordering of $V(G)$ with respect to its vertex partition. Define a truth assignment for variables $x_1, \dots, x_r$ as $x_i = (x_i^F < x_i^T)$, for $i \in \{1, \dots, r\}$. 

As observed above, if the value of $x_i$ is true (resp. false), then
for every $j \in \{1, \dots, s\}$, the part $X_{ij}$ is ordered
$x_{ij}^F$ $x_{ij}^T$ (resp. $x_{ij}^T$ $x_{ij}^F$). So, for each
clause ${\mathcal C}_j$, the part corresponding to its $k$-th
literal will be ordered as $O_{kj}$ if the literal is assigned true
and as $\bar O_{kj}$ if the literal is assigned false. Since for
each valid order of the vertices there exist $k, k' \in \{1,2,3\}$
such that the part corresponding to the $k$-th literal is ordered
$O_{kj}$ and the part corresponding to the $k'$-th literal is
ordered $\bar O_{k'j}$, the truth assignment satisfies the instance
$\varphi$ of \textsc{Not all equal 3-SAT}.


Suppose now that there is a truth assignment for variables $x_1,
\dots, x_r$ that satisfies the instance $\varphi$ of \textsc{Not
all equal 3-SAT}. Define the order of the vertices in the
following way. The first $r$ vertices are $\{x_i^F : x_i \mbox{ is
true}\} \cup \{x_i^T : x_i \mbox{ is false}\}$, and the last $r$
vertices are $\{x_i^T : x_i \mbox{ is true}\} \cup \{x_i^F : x_i
\mbox{ is false}\}$. 
Between these first and last $r$ vertices,
place all the parts $X_{ij}$, $Y_{1j}$,$Y_{2j}$ and $Y_{3j}$ associated with each clause $\mathcal{C}_j$,
$j=1,\dots,s$.
In particular, the parts $X_{ij}$, $Y_{1j}$, $Y_{2j}$ and $Y_{3j}$ are ordered accordingly to which of the conditions
(a)--(f) is satisfied. 
By the analysis above, this is a precedence strongly consistent ordering of the vertices of $G$, with respect to the defined parts. 
As an example, Figure~\ref{fig:ppthinnes:npc:instance} depicts the instance of the \textsc{Partitioned \fpp{k}} problem built from the instance $\varphi = \{(x_1\wedge x_2 \wedge x_3)\}$ of \textsc{Not all equal 3-SAT} problem.
\end{proof}

\begin{figure}[htbp]
\centering     
\includegraphics[width=80mm]{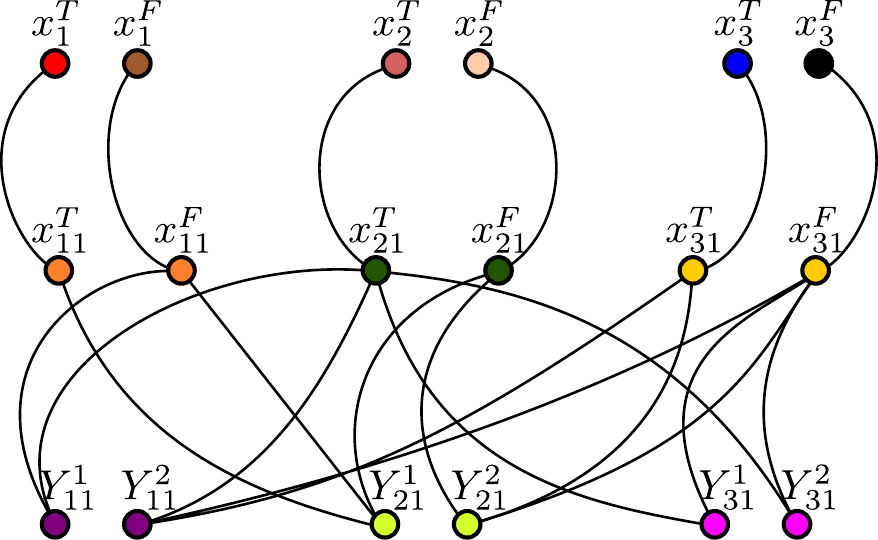}
\caption{ Instance of the \textsc{Partitioned \fpp{k}} problem built from the instance $\varphi = \{(x_1\wedge x_2 \wedge x_3)\}$ of \textsc{Not all equal 3-SAT} problem.}
\label{fig:ppthinnes:npc:instance}
\end{figure}

The remaining of this section is dedicated to discuss a polynomial time solution to a variation of the \textsc{Partitioned \fpp{k}} problem. This variation consists in considering a fixed number of parts for $V(G)$, that is, $k$ is removed from the input and  taken as a constant for the problem. The strategy that will be adopted is the same used for \textsc{Partitioned \fp{k}} problem. It is not difficult to see that Property~\ref{prop:thinness} is not sufficient to describe the requirements that must be imposed in the ordering of vertices in a precedence strongly consistent ordering.  This is so because, unlike what occurs in a precedence consistent ordering, in a precedence strongly consistent ordering the vertices of each part $V_i$ may impose an ordering to vertices that belong to parts that precede and succeed $V_i$. Given this fact, we observe the following property to describe such relation.

\begin{property}\label{prop:pthinness}
Let  $(V_1,V_2, \ldots, V_k)$ be a partition of $V(G)$, $s$ a precedence strongly consistent ordering and  $1 \leq i , j \leq k$. If $V_i$ precedes $V_j$ in $s$, then for all  $u,v \in V_i$ and $w \in V_j$, if $(u,w) \not\in E(G)$ and $(v,w) \in E(G)$, then  $u$ precedes $v$ in $s$. Moreover, for all  $u \in V_i$ and $w,x \in V_j$, if $(u,w) \not\in E(G)$ and $(u,x) \in E(G)$, then  $x$ precedes $w$ in $s$
\end{property}

\begin{figure}[htbp]
\centering     
\includegraphics[width=80mm]{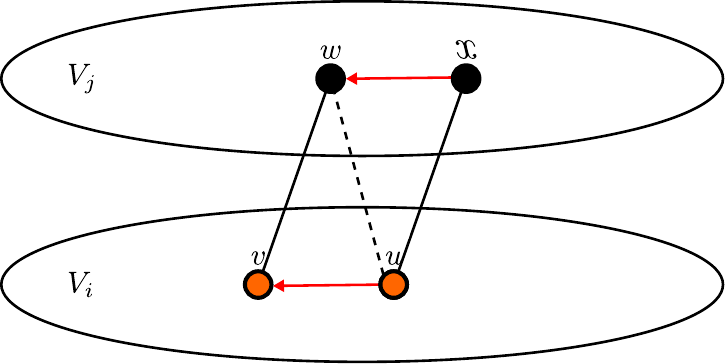}
\caption{Precedence relations among the vertices in a precedence strongly consistent ordering.}
\label{fig:propthinness}
\end{figure}

Notice that the greedy strategy used in Section~\ref{chpt:pthinness} does not work in the problem being considered. This is so because, according to Property~\ref{prop:pthinness} and visually depicted in Figure~\ref{fig:propthinness}, the ordering of vertices of $V_i$ in a precedence strongly consistent ordering $s$ is influenced by both the parts that precede and succeed $V_i$ in $s$. Despite this, the  method described in the Section~\ref{chpt:pthinness} to validate whether a part can precede a set of parts  is also useful to present a solution to this problem.  

Let $G$ be a graph, $\mathcal{V} =(V_1,V_2, \ldots, V_k)$ a partition of $V(G)$ and $s$ a precedence strongly consistent ordering of $V(G)$ for the given partition. 
Clearly, for all $1 \leq 1 \leq k$,  $G[V_i]$ must be a proper interval graph for $s$ to be a precedence strongly consistent ordering. Verifying whether $G[V_i]$ is a proper interval graph can be accomplished in linear time. If one of the parts does not induce a proper interval graph, then the answer is \textsc{NO}. Otherwise, each part has a $PQ$ tree associated to it. 

For a  given sequence $s_\mathcal{V}$ of parts of $\mathcal{V}$, suppose that $V_j<V_i<V_z$ in $s_\mathcal{V}$,  for $1 \leq j,i,z \leq k$. Let $T_i$ be a $PQ$ tree of $G[V_i]$.
Notice that, considering the Property~\ref{prop:pthinness},  if we apply Theorem~\ref{theo:thinness} to get the ordering constraints imposed by $V_j$ and $V_z$ to $T_i$, and add the directed edges  to $T_i$ in the same way that has been done in Section~\ref{chpt:pthinness} and $T_i$ can meet the constraints, then $T_i$ is compatible to being at that position. That is, the vertices of $V_i$ can precede the vertices of $V_z$ and succeed the vertices of $V_j$ in any precedence strongly consistent ordering. We show that for any $s_\mathcal{V}$, it is possible to verify whether there is a precedence strongly consistent ordering $s$ in which the ordering of the parts in $s$ is precisely $s_\mathcal{V}$.

To solve the problem, we will test all $k!$ possible permutations $s_\mathcal{V}= V'_1,V'_2,\ldots\allowbreak , V'_k$ among the parts of $\mathcal{V}$ and validate, using a digraph and $PQ$ trees, if each part $V'_i$ can precede  $V'_z$ and succeed $V'_j$, for all $1\leq j <i< z \leq k$. This validation is done exactly as described in Section~\ref{chpt:pthinness}, except for using Property~\ref{prop:pthinness} instead of Property~\ref{prop:thinness}. If there is some $s$ that satisfies this condition, then there is a precedence strongly consistent ordering with respect to $s$ and $G$ is a \fpp{k} graph concerning $\mathcal{V}$. Otherwise, $G$ is not a \fpp{k} graph with respect to $\mathcal{V}$.  Algorithm~\ref{alg:kfpp} formalizes the procedure.

\begin{algorithm}[htbp]
\textbf{Input}: $G$: a graph; 
        $\mathcal{V}$: a $k$-partition $(V_1, V_2, \ldots, V_k)$ of $V(G)$ for some fixed $k$  \\
\vspace{0.20cm}
\KpptF{
  
    \For{\Each  $V_i \in \mathcal{V}$ }{
         \If{\textsf{\upshape $G[V_i]$ is not a proper interval graph}}{
            \Return (\textsc{NO}, $\emptyset$)
         }
    }
    
    \For{\Each \textsf{\upshape  permutation $s_\mathcal{V}$ of  $\mathcal{V}$ }}{
        $s \leftarrow \emptyset$\\
        $foundValidPermutation \leftarrow$ \TRUE\\
        \For{\Each $V_i \in s_\mathcal{V}$}{
            Create a digraph $D = (V_i, \emptyset)$\\
            Build a $PQ$ tree $T_i$ of $G[V_i]$\\
            \For{\Each \textsf{\upshape $V_j \in s_\mathcal{V}$ such that $V_j \neq V_i$}}{
                Let $S$ be the set of precedence relations among the vertices of $V_i$ concerning $V_j$ (Property~\ref{prop:pthinness})\\
                \For{\Each $(u<v) \in S$}{
                   $E(D) \leftarrow  E(D) \cup \{(u,v)\}$ \\
                    \For{\textsf{\upshape \Each node $X$ of $T_i$}}{
                        Add the direct edges, deriving from $(u<v)$, among the children of $X$ (Theorem~\ref{theo:thinness})  \\
                    }
                }
                \For{\textsf{\upshape \Each node $X$ of $T_i$}}{
                    Let $D = (V', E')$ be the digraph where $V'$ is the set of the children of $X$ and  $E'$ are the directed edges added among them\\
                     \If{\textsf{\upshape there is a topological ordering $s$ of $D$}}{
                         Arrange the children of $X$ according to $s$ \\
                    }
                    \Else{
                        $foundValidPermutation \leftarrow$ \FALSE \\
                    }
                }
            }
            \If{ foundValidPermutation}{
                \AddEdgesFromPQTree{$G[V_i]$, $D$, $T_i$}\\
                \If{\textsf{\upshape there is a topological ordering $s_i$ of $D$}}{
                    $s \leftarrow ss_i$\\
                }
                \Else{$foundValidPermutation \leftarrow$ \FALSE \\
                \Break}
            }
        }
        \If{ foundValidPermutation}{
            \Return (\textsc{YES}, $s$)
         }
    }
    \Return (\textsc{NO}, $\emptyset$)
}
\caption{\textsc{Partitioned \fpp{k}}}
\label{alg:kfpp}
\end{algorithm}

Concerning the time complexity of the algorithm, first note that to create in $T_i$ the directed edges derived from Property~\ref{prop:pthinness} related to $V'_j$ (resp. $V'_z$) can be done in $O(n^5)$ time. Also, for each $V_i$ we apply this property considering all the other parts, that is, $O(k)$ times, and therefore $O(k^2)$ times overall considering each $V_i$. As this operation must be executed for all $k!$ possible permutations, and considering the analysis of this same method in Section~\ref{chpt:pthinness}, the given strategy yields a worst case time complexity of $O(k!k^2n^5) = O(n^5)$ as $k$ is fixed.

We end this section by mentioning an even more restricted case of the problem. Namely, the recognition of \fpp{k} graphs for a fixed number of parts such that each part induces a connected graph. Note that, as each part induces a connected graph, the proper interval graph induced by each part has an unique proper canonical ordering but reversion or mutual true twins permutation. This fact implies that the $PQ$ tree related to each one of these proper interval graphs is formed by one node of type $Q$, which is the root, that has all the maximal cliques as its children. That is, there are only two possible configurations for each one of these $PQ$ trees. As the number of possible configurations is a constant, this property leads to a more efficient algorithm. Instead of using Theorem~\ref{theo:thinness} to map restrictions to the $PQ$ tree in order to obtain a compatible tree, the algorithm can check both configurations of the $PQ$ tree independently. As the step which uses Theorem~\ref{theo:thinness} is no longer required, this approach leads to an  algorithm that yields a worst case time complexity of $O(k!k^22^kn^3)=O(n^3)$. This strategy is presented in Algorithm~\ref{alg:kfppc}.   

\begin{algorithm}[htbp]

\textbf{Input}: $G$: a graph; 
        $\mathcal{V}$: a $k$-partition $(V_1, V_2, \ldots, V_k)$ of $V(G)$, for some fixed $k$, such that $G[V_i]$ is connected for all $1 \leq i \leq k$  \\
\vspace{0.20cm}
\KpptFC{
  
    \For{\Each  $V_i \in \mathcal{V}$ }{
         \If{\textsf{\upshape $G[V_i]$ is not a proper interval graph}}{
            \Return (\textsc{NO}, $\emptyset$)
         }
    }
    
    \For{\Each  \textsf{\upshape permutation $s_\mathcal{V}$ of  $\mathcal{V}$ }}{
        $s \leftarrow \emptyset$\\
        $foundValidPermutation \leftarrow$ \TRUE\\
        \For{\Each $V_i \in s_\mathcal{V}$}{
            $foundValidTree \leftarrow$ \FALSE\\
            Build a $PQ$ tree $T_i$ of $G[V_i]$\\
            Let $T_i'$ be the $PQ$ tree obtained from $T_i$ by reversing the order of the children of the root\\
            \For{\Each $T \in \{T_i,T_i'\}$}{
                Create a digraph $D = (V_i, \emptyset)$\\
                \For{\Each $V_j \in s_\mathcal{V}$ such that $V_j \neq V_i$}{
                    Let $S$ be the set of precedence relations among the vertices of $V_i$ concerning $V_j$ (Property~\ref{prop:pthinness})\\
                    \For{\Each $(u<v) \in S$}{
                       $E(D) \leftarrow  E(D) \cup \{(u,v)\}$ \\
                    }
                }
                \AddEdgesFromPQTree{$G[V_i]$, $D$, $T$}\\
                \If{\textsf{\upshape there is a topological ordering $s_i$ of $D$}}{
                    $s \leftarrow ss_i$\\
                    $foundValidTree \leftarrow$ \TRUE\\
                    \Break
                }
            }
            \If{\Not foundValidTree}{
                $foundValidPermutation \leftarrow$ \FALSE\\
                \Break
            }
            
        }
        \If{ foundValidPermutation}{
            \Return (\textsc{YES}, $s$)
         }
    }
    \Return (\textsc{NO}, $\emptyset$)
}
\caption{\textsc{Partitioned \fpp{k}}}
\label{alg:kfppc}
\end{algorithm}
\section{Characterization of \fp{k} and \fpp{k} Graphs} \label{chpt:characterization}
This section describes a characterization of \fp{k} and \fpp{k} graphs for a given partition. First, we define some further concepts.

A graph $G$ is a  \emph{split} graph if there is a bipartition $(V_1,V_2)$ of $V(G)$ such that $V_1$ is a clique and $V_2$ a stable set of $G$. A graph $G$ is called a \emph{threshold} graph if $G$ is a split graph and there is  an ordering of $V_1$ (resp. $V_2$), named \emph{threshold ordering}, such that the neighborhood of vertices of $V_1$ (resp. $V_2$) are ordered by inclusion, that is, if $u$ precedes $v$ in the threshold ordering, $N[u] \subseteq N[v]$ (resp. $N(u) \subseteq N(v)$).

For the following characterization, we will define the split graph $S_G(V_1,V_2)$ with respect to a bipartition $(V_1,V_2)$ of a graph $G$. Such a graph is obtained from $G$ by the completion of edges among the vertices of $V_1$ and the removal of all edges among the vertices of $V_2$, hence transforming $V_1$ into a clique and $V_2$ into a stable set.
The Figures~\ref{fig:b} and~\ref{fig:d} illustrate the corresponding split graphs of the graphs in the Figures~\ref{fig:a} and~\ref{fig:c}, respectively.

Let $s = s_1s_2$ be an ordering of $V(G)$. We define $(s_1,s_2)$ as \emph{in accordance} with $G$ if $s_1$ is a threshold ordering of  $S_G(V(s_1),V(s_2))$, and if  $s_1$ and $s_2$ are canonical orderings of $G[V(s_1)]$ and  $G[V(s_2)]$, respectively.
Additionally, we define $(s_1,s_2)$ as \emph{strongly in accordance} with $G$ if  $s_1$ and $s_2$ are proper canonical orderings of $G[V(s_1)]$ and  $G[V(s_2)]$, respectively, and both $s_1$ and $\bar{s_2}$ are threshold orderings of  $S_G(V(s_1),V(s_2))$.

As an example, let $s_1$ and $s_2$ be the orderings represented in the Figure~\ref{fig:abcd} by reading the vertices of each part, of each graph, from left to right. The related pair $(s_1,s_2)$ of  Figure~\ref{fig:a} is in accordance, but is not strongly in accordance, with the given graph. 
In the order hand, Figure~\ref{fig:c} depicts a pair $(s_1$,$s_2)$ which is strongly in accordance with the associated graph. Finally, Figure~\ref{fig:e} exemplifies a case where  the given orderings are neither in accordance or strongly in accordance with its correlated graph.
In fact,  since both $V_1$ and $V_2$ in Figure~\ref{fig:e} induce subgraphs that admit only four canonical orderings each, it can be easily verified that there is  no $(s_1, s_2)$ which is in accordance, or strongly in accordance, with the graph.

\begin{figure}[htb]
    \centering 
    \subfigure[$G_1$]{\label{fig:a}\includegraphics[width=30mm]{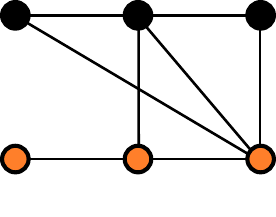}}\hfil
    \subfigure[$S_{G_1}(V_1,V_2)$]{\label{fig:b}\includegraphics[width=30mm]{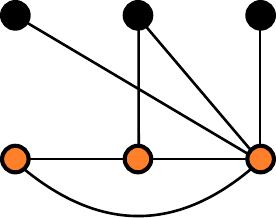}}\hfil \\
    \subfigure[$G_2$]{\label{fig:c}\includegraphics[width=30mm]{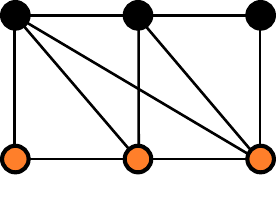}}\hfil
    \subfigure[$S_{G_2}(V_1,V_2)$]{\label{fig:d}\includegraphics[width=30mm]{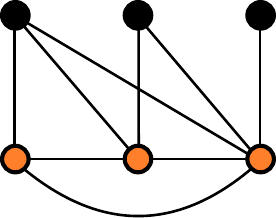}}\hfil \\
    \subfigure[$G_3$]{\label{fig:e}\includegraphics[width=30mm]{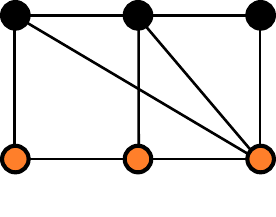}}\hfil
    \subfigure[$S_{G_3}(V_1,V_2)$]{\label{fig:f}\includegraphics[width=30mm]{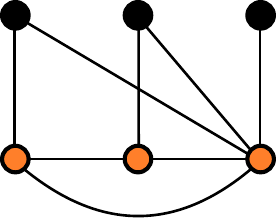}}

\caption{Corresponding split graphs ($V_1$ is the set of orange vertices and $V_2$ the black ones).}
\label{fig:abcd}
\end{figure}

\vspace{.2cm}
\begin{lemma} \label{lem:2pt}

Let $G$ be a graph. Then, $G$ is \fp{2} if, and only if, there is a consistent ordering $s = s_1s_2$ for which $\mathcal{V}=(V(s_1),V(s_2))$ is a bipartition of $V(G)$ 
such that $(s_1,s_2)$ is in accordance with $G$.

\end{lemma}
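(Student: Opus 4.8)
The plan is to unfold what it means for an ordering $s = s_1 s_2$ to be consistent with the bipartition $(V(s_1),V(s_2))$, where the block $V(s_1)$ occupies the first $|V(s_1)|$ positions and $V(s_2)$ the last $|V(s_2)|$ positions. A consistency triple $(p,q,r)$ ordered according to $s$ has $p,q$ in a common part, and I will split into cases according to how $p,q,r$ distribute among the two blocks. Since the first block entirely precedes the second, only three configurations arise: $p,q,r$ all in the first block; $p,q$ in the first block and $r$ in the second; and $p,q$ in the second block, which forces $r$ in the second block as well. The key point is that the consistency condition decomposes \emph{exactly} into the three sub-conditions corresponding to these cases, and these three sub-conditions are precisely the three clauses in the definition of ``$(s_1,s_2)$ is in accordance with $G$''.

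For the forward direction, assume $G$ is \fp{2}, witnessed by a bipartition and a consistent ordering $t$ with each part consecutive; relabel so that, writing $t = t_1 t_2$, the block $A := V(t_1)$ precedes the block $B := V(t_2)$. I claim $(t_1,t_2)$ is in accordance with $G$. Taking a consistency triple $p<q<r$ in $t$ with $p,q$ in the same part: (i) when $p,q,r\in A$, consistency restricted to such triples is, by Theorem~\ref{theo:can:ordering} applied to $G[A]$, exactly the statement that $t_1$ is a canonical ordering of $G[A]$; (ii) when $p,q\in A$ and $r\in B$ (the only remaining possibility when $p,q\in A$), consistency says $(p,r)\in E(G)\Rightarrow(q,r)\in E(G)$ whenever $p$ precedes $q$ in $t_1$, i.e.\ $N_G(p)\cap B\subseteq N_G(q)\cap B$; since in $S_G(A,B)$ we have $N[u]=A\cup(N_G(u)\cap B)$ for every $u\in A$, this is precisely the statement that $t_1$ linearly orders $A$ by closed-neighborhood inclusion in $S_G(A,B)$, i.e.\ that $t_1$ is a threshold ordering of $S_G(A,B)$; (iii) when $p,q\in B$, then $r\in B$ because $B$ is the last block, and consistency restricted to these triples says $t_2$ is a canonical ordering of $G[B]$. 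Clauses (i)--(iii) together say that $(t_1,t_2)$ is in accordance with $G$.

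For the converse, given $s=s_1s_2$ with $(V(s_1),V(s_2))$ a bipartition and $(s_1,s_2)$ in accordance with $G$, set $A=V(s_1)$, $B=V(s_2)$, and verify that $s$ is consistent with the partition $(A,B)$. Any consistency triple $p<q<r$ with $p,q$ in a common part falls into one of the three cases above, and in each case the corresponding ingredient of ``in accordance'' — $s_1$ canonical on $G[A]$ (via Theorem~\ref{theo:can:ordering}), $s_1$ a threshold ordering of $S_G(A,B)$, or $s_2$ canonical on $G[B]$ — delivers $(q,r)\in E(G)$ from $(p,r)\in E(G)$; here one uses that $(p,r)\in E(G)$ with $p,r$ in a common part is the same as $(p,r)$ being an edge of the induced subgraph. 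Since $A$ and $B$ are consecutive in $s$ by construction, $G$ is \fp{2}.

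I do not expect a real obstacle; the argument is a disciplined case split. The point requiring the most care is the bookkeeping of the position of $r$ relative to $p,q$: when $p,q$ both lie in the second block, $r$ cannot escape to the first block, which is exactly why the cross-block condition is one-directional and can be absorbed entirely into a threshold-ordering condition on $s_1$ alone — and, relatedly, keeping the direction of the neighborhood inclusion aligned with the left-to-right order of $s_1$ (smaller $S_G$-closed-neighborhood first).
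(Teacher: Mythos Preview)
Your proposal is correct and takes essentially the same approach as the paper: both unfold the consistency condition on $s=s_1s_2$ and match the resulting constraints to the three clauses of ``in accordance'' (canonical on each block, threshold ordering for the cross-block part). Your case split is in fact more explicit than the paper's version, which states the within-block canonicity as obvious and, in the converse, jumps directly to the cross-block case without spelling out that the canonical orderings dispose of the within-block triples.
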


\begin{proof}
Let $\mathcal{V} = \{V_1, V_2\}$ be a bipartition of $V(G)$ and $s_1$ and $s_2$ be two total orderings of $V_1$ and  $V_2$, respectively.

Consider $G$ is \fp{2} concerning  $\mathcal{V}$. Let $s=s_1s_2$ be a precedence consistent ordering of $V(G)$. Thus, $s_1$ is a canonical ordering of $G[V_1]$.
Suppose by absurd that $s_1$ is not a threshold ordering of $S_G(V_1,V_2)$. 
That is, there are $u,v \in V_1$ and  $w \in V_2$, with $u<v$ in $s_1$, such that $w \in N[u]$ and $w \not\in N[v]$. As $u<v<w$ em $s$, there is a contradiction with the fact that $s$ is a precedence consistent ordering of $V(G)$. Therefore, 
$(s_1,s_2)$ is in accordance with $G(V_1,V_2)$.

On the other hand, consider that $(s_1,s_2)$ is in accordance with $G$. Thus, both $s_1$ and $s_2$ are canonical orderings of $G[V_1]$ and  $G[V_2]$, respectively, and  $s_1$ is a threshold ordering of $S_G(V_1,V_2)$. Now we prove that $s=s_1s_2$ is a precedence consistent ordering of $V(G)$ concerning  $\mathcal{V}$. Suppose by absurd that this statement does not hold. That is, there are  $u,v \in V_1$ and $w \in V_2$, with $u < v$ in $s$, such that $(u,w) \in E(G)$ and $(v,w) \notin E(G)$. This is a contradiction with the fact that $s_1$ is a threshold ordering  ordering of $S_G(V_1,V_2)$. Hence, $s=s_1s_2$ is a precedence consistent ordering of $V(G)$ concerning  $\mathcal{V}$.
\end{proof}

\begin{lemma}\label{lem:2ppt}
Let $G$ be a graph. Then, $G$ is \fpp{2} if, and only if, there is a strongly consistent ordering $s = s_1s_2$ for which $\mathcal{V}=(V(s_1),V(s_2))$ is a bipartition of $V(G)$ 
such that $(s_1,s_2)$ is strongly in accordance with $G$.

\end{lemma}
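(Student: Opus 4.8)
The plan is to mimic the proof of Lemma~\ref{lem:2pt}, upgrading each ingredient from its ``consistent'' version to its ``strongly consistent'' version. Recall that a strongly consistent ordering $s$ is one such that both $s$ and $\bar s$ are consistent, and that being \fpp{2} for the bipartition $\mathcal{V}=(V_1,V_2)$ means there is a precedence strongly consistent ordering $s=s_1s_2$ for this partition (with $V(s_1)=V_1$, $V(s_2)=V_2$). So the statement to prove is the equivalence between $G$ being \fpp{2} and the existence of such an $s=s_1s_2$ with $(s_1,s_2)$ strongly in accordance with $G$, i.e.\ $s_1,s_2$ proper canonical orderings of $G[V_1],G[V_2]$ and both $s_1$ and $\bar{s_2}$ threshold orderings of $S_G(V_1,V_2)$.

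For the forward direction, I would assume $s=s_1s_2$ is a precedence strongly consistent ordering for $\mathcal{V}$. First, since $s$ restricted to $V_i$ is a canonical ordering of $G[V_i]$ and the same holds for $\bar s$ (hence for the reversal of $s_i$), each $s_i$ is both canonical and ``reverse-canonical'', which by Theorem~\ref{theo:procan:ordering} means $s_i$ is a proper canonical ordering of $G[V_i]$; this handles the ``proper'' requirement. Next, the threshold condition on $s_1$: if $s_1$ failed to be a threshold ordering of $S_G(V_1,V_2)$, there would be $u<v$ in $s_1$ and $w\in V_2$ with $w\in N[u]\setminus N[v]$, contradicting that $s=s_1s_2$ (with $u<v<w$) is consistent — exactly the argument from Lemma~\ref{lem:2pt}. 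Finally, the threshold condition on $\bar{s_2}$: apply the same reasoning to $\bar s = \bar{s_2}\,\bar{s_1}$, which is also a precedence consistent ordering for the partition $(V_2,V_1)$; its first block is $\bar{s_2}$ and so $\bar{s_2}$ must be a threshold ordering of $S_G(V_2,V_1)=S_G(V_1,V_2)$. That gives ``strongly in accordance''.

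For the converse, assume $(s_1,s_2)$ is strongly in accordance with $G$ and set $s=s_1s_2$. I must show $s$ is a precedence strongly consistent ordering for $\mathcal{V}=(V_1,V_2)$, i.e.\ that both $s$ and $\bar s$ are (precedence) consistent with the respective partitions. That $s$ is precedence consistent follows verbatim from the converse direction of Lemma~\ref{lem:2pt}, since $s_1,s_2$ being proper canonical implies in particular canonical, and $s_1$ is a threshold ordering of $S_G(V_1,V_2)$. For $\bar s=\bar{s_2}\,\bar{s_1}$: the reversals $\bar{s_1},\bar{s_2}$ are again proper canonical (hence canonical) orderings of $G[V_1],G[V_2]$ (by the symmetry in Theorem~\ref{theo:procan:ordering}), and $\bar{s_2}$ is by hypothesis a threshold ordering of $S_G(V_1,V_2)=S_G(V_2,V_1)$, so by Lemma~\ref{lem:2pt} applied to the partition $(V_2,V_1)$, $\bar s$ is precedence consistent. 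Hence $s$ is precedence strongly consistent and $G$ is \fpp{2} concerning $\mathcal{V}$.

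The only genuinely new point compared with Lemma~\ref{lem:2pt} is the bookkeeping that a proper canonical ordering is precisely an ordering that is canonical in both directions (Theorem~\ref{theo:procan:ordering} plus Theorem~\ref{theo:can:ordering}), together with the observation that the ``interface'' threshold condition must be checked from the perspective of both $s$ and $\bar s$, which is exactly why the definition of ``strongly in accordance'' requires $s_1$ \emph{and} $\bar{s_2}$ to be threshold orderings. I do not anticipate a real obstacle; the subtlety to state carefully is that $S_G(V_1,V_2)$ and $S_G(V_2,V_1)$ denote the same split graph (same edge set), so a threshold ordering of one, read appropriately, serves for the other — once that is noted, the whole argument is a symmetrized copy of the proof of Lemma~\ref{lem:2pt}.
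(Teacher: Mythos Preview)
Your overall plan is sound and close to the paper's proof: both directions lean on Lemma~\ref{lem:2pt} and then handle the extra ``proper'' and ``reverse-threshold'' conditions. Your symmetric device of applying Lemma~\ref{lem:2pt} to $\bar s=\bar{s_2}\,\bar{s_1}$ with respect to the partition $(V_2,V_1)$ is a nice shortcut; the paper instead argues the $\bar{s_2}$ threshold condition directly by contradiction inside $S_G(V_1,V_2)$.

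There is, however, one genuine slip. You write that ``$S_G(V_1,V_2)$ and $S_G(V_2,V_1)$ denote the same split graph (same edge set)''. They do not: in $S_G(V_1,V_2)$ the set $V_1$ is turned into a clique and $V_2$ into a stable set, while in $S_G(V_2,V_1)$ it is the other way around, so the edge sets differ whenever $G[V_1]$ is not already complete or $G[V_2]$ is not already edgeless. What \emph{is} true, and is all you need, is that for an ordering of $V_2$ the threshold condition coincides in both graphs: in $S_G(V_2,V_1)$ (where $V_2$ is the clique side) the condition $N[u]\subseteq N[v]$ for $u,v\in V_2$ reduces to $N_G(u)\cap V_1\subseteq N_G(v)\cap V_1$, and in $S_G(V_1,V_2)$ (where $V_2$ is the stable side) the condition $N(u)\subseteq N(v)$ reduces to exactly the same inclusion of cross-neighborhoods. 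Once you replace the false identification of the two split graphs by this observation, your argument goes through; alternatively, you can bypass the issue entirely by arguing directly, as the paper does, that a violation of the threshold condition for $\bar{s_2}$ in $S_G(V_1,V_2)$ yields a triple $w<u<v$ in $s$ with $u,v\in V_2$, $(w,v)\in E(G)$, $(w,u)\notin E(G)$, contradicting strong consistency.
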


\begin{proof}
Let $\mathcal{V} = \{V_1, V_2\}$ be a bipartition of $V(G)$ and $s_1$ and $s_2$ be two total orderings of $V_1$ and  $V_2$, respectively.

Consider $G$ is  \fpp{2} concerning  $\mathcal{V}$. Let $s=s_1s_2$ be a precedence strongly consistent ordering of $V(G)$. Thus, $s_1$ is a proper canonical ordering of $G[V_1]$ and, as $G$ is also a \fp{2} graph, $s_1$ is a threshold ordering of $S_G(V_1,V_2)$, according Lemma~\ref{lem:2pt}.
Suppose by absurd that $\bar{s_2}$ is not a threshold ordering of $S_G(V_1,V_2)$. 
That is, there are $u,v \in V_2$ and $w \in V_1$, with $u<v$ in $s_2$, such that $w\in N(v)$ e $w\not\in N(u)$. That is, a contradiction with the fact that $s$ is a precedence strongly consistent ordering, as $w < u < v$ in $s$. Hence, 
$(s_1,s_2)$ is strongly in accordance with $G(V_1,V_2)$.

Now consider ($s_1$,$s_2$) is strongly in accordance with $G$. That is, both $s_1$ and $\bar{s_2}$ (resp. $s_1$ and $s_2$) are threshold orderings (proper canonical orderings) of $S_G(V_1,V_2)$ (resp. $G[V_1]$ and $G[V_2]$, respectively). 
By Lemma~\ref{lem:2pt}, $s=s_1s_2$ is a precedence consistent ordering of $V(G)$. 
Next, we prove that $s=s_1s_2$ is also a precedence strongly  consistent ordering of $V(G)$ concerning  $\mathcal{V}$. For the sake of contradiction, suppose that the statement does not hold. That is, there are  $u,v \in V_2$ and $w \in V_1$, with $u < v$ in $s$, such that $(v,w) \in E(G)$ and $(u,w) \notin E(G)$. This is an absurd, as $\bar{s_2}$ is a threshold ordering  ordering of $S_G(V_1,V_2)$. Consequently, $s=s_1s_2$ is a precedence strongly consistent ordering of $V(G)$ concerning  $\mathcal{V}$.
\end{proof}
 
The above lemmas can be generalized to an arbitrary number of parts as follows.
 
\begin{theorem}\label{teo:kppt}
Let $G$ be a graph. For all $k > 2$, $G$ is \fp{k} (resp. \fpp{k}) if, and only if, there is a precedence consistent (resp. strongly consistent) ordering $s = s_1 \ldots s_k$ for which $\mathcal{V}=(V(s_1),V(s_2)),\ldots,V(s_k))$ is a $k$-partition of $V(G)$  such that for all $1 \leq i<j \leq k$, $(s_i,s_j)$ is in accordance (resp. strongly in accordance) with  $G[V(s_i) \cup V(s_j)]$.
\end{theorem}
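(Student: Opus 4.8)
The plan is to generalize the two-part lemmas (Lemmas~\ref{lem:2pt} and~\ref{lem:2ppt}) to an arbitrary number of parts by observing that a precedence consistent (resp. strongly consistent) ordering $s = s_1\ldots s_k$ restricted to any two consecutive blocks $V(s_i)$ and $V(s_j)$ with $i<j$ is essentially a precedence consistent (resp. strongly consistent) ordering of the induced subgraph $G[V(s_i)\cup V(s_j)]$ for the bipartition $(V(s_i),V(s_j))$. So the key idea is to reduce the general statement to pairwise conditions.

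For the forward direction, suppose $G$ is \fp{k} (resp. \fpp{k}) with precedence consistent (resp. strongly consistent) ordering $s = s_1\ldots s_k$ for the $k$-partition $\mathcal{V}$. Fix $1 \leq i < j \leq k$. First I would note that each $s_\ell$ is a canonical (resp. proper canonical) ordering of $G[V(s_\ell)]$, which is immediate by restricting the defining triple condition to triples inside $V(s_\ell)$ and invoking Theorem~\ref{theo:can:ordering} (resp. Theorem~\ref{theo:procan:ordering}); in particular this holds for $s_i$ and $s_j$. Next I would show $s_i$ is a threshold ordering of $S_{G'}(V(s_i),V(s_j))$ where $G' = G[V(s_i)\cup V(s_j)]$: if $u<v$ in $s_i$ and $w \in V(s_j)$ with $w \in N[u]$ but $w \notin N[v]$, then since $u < v < w$ in $s$ and $(u,w)\in E(G)$, consistency forces $(v,w)\in E(G)$, a contradiction. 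For the proper case, I would additionally argue that $\bar s_j$ is a threshold ordering of the same split graph: if $u < v$ in $s_j$ and $w \in V(s_i)$ with $w \in N(v)$ but $w \notin N(u)$, then $w < u < v$ in $s$ and $(w,v)\in E(G)$, so strong consistency forces $(w,u)\in E(G)$, again a contradiction. Thus $(s_i,s_j)$ is in accordance (resp. strongly in accordance) with $G[V(s_i)\cup V(s_j)]$, exactly as in the proofs of Lemmas~\ref{lem:2pt} and~\ref{lem:2ppt}.

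For the converse, suppose such an ordering $s = s_1\ldots s_k$ exists with the pairwise accordance property. I would verify directly that $s$ is a precedence consistent (resp. strongly consistent) ordering for $\mathcal{V}$: since the parts are consecutive by construction, it remains to check the triple conditions. Take a triple $(p,q,r)$ ordered according to $s$ with $p,q$ in the same part $V(s_i)$ and $(p,r)\in E(G)$. If $r \in V(s_i)$ too, then $(q,r)\in E(G)$ because $s_i$ is a canonical ordering of $G[V(s_i)]$. If $r \in V(s_j)$ for some $j > i$, then $(p,r)\in E(G)$ means $r \in N[p]$ in $G[V(s_i)\cup V(s_j)]$, and since $s_i$ is a threshold ordering of $S_{G'}(V(s_i),V(s_j))$ and $p<q$ in $s_i$, we get $r \in N[q]$, i.e. $(q,r)\in E(G)$. (Note $j<i$ is impossible since $p$ precedes $r$.) For the proper case I would also check the reversed triple condition: take $(p,q,r)$ with $q,r$ in the same part $V(s_j)$ and $(p,r)\in E(G)$; if $p \in V(s_j)$ use that $s_j$ is a proper canonical ordering, and if $p \in V(s_i)$ with $i<j$, use that $\bar s_j$ is a threshold ordering of $S_{G'}(V(s_i),V(s_j))$ together with $q<r$ in $s_j$ (hence $r$ before $q$ in $\bar s_j$) to conclude $(p,q)\in E(G)$.

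I do not expect a serious obstacle here; the statement is a fairly mechanical lift of the two-part case, and the only care needed is bookkeeping of which part each of $p,q,r$ lies in and the direction of the threshold inclusion. The one point worth stating cleanly is that the pairwise conditions only ever constrain triples involving at most two parts, so no genuinely new three-part interaction arises — this is precisely why the characterization via pairs suffices. I would present the argument compactly by invoking the proofs of Lemmas~\ref{lem:2pt} and~\ref{lem:2ppt} applied to $G[V(s_i)\cup V(s_j)]$ rather than repeating the contradiction arguments in full.
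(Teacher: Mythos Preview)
Your proposal is correct and follows essentially the same approach as the paper: the paper's proof simply observes that $s=s_1\ldots s_k$ is a precedence (strongly) consistent ordering if and only if each $s_is_j$ is one for $G[V(s_i)\cup V(s_j)]$, and then invokes Lemmas~\ref{lem:2pt} and~\ref{lem:2ppt}. You unpack this observation explicitly via the triple-by-triple case analysis, which is exactly the content of that observation; your remark that no genuinely three-part interaction arises is the heart of the matter in both arguments.
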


\begin{proof}

Suppose there are  a total ordering $s = s_1 \ldots s_k$   and a partition $\mathcal{V}=(V(s_1),V(s_2)),\ldots,V(s_k))$ of $V(G)$. Notice that $s$ is a precedence consistent (resp. strongly consistent) ordering if, and only if, for all $1 \leq i < j \leq k$, $s_i s_j$ is a precedence consistent (resp. strongly consistent) ordering of $G[V_i \cup V_j]$ concerning the bipartition $(V_i, V_j)$. By Lemma~\ref{lem:2pt} (resp. Lemma~\ref{lem:2ppt}), it holds if, and only if,  $(s_i,s_j)$ is in accordance (resp. strongly in accordance) with  $G[V(s_i) \cup V(s_j)]$.
\end{proof}
\unskip
\section{Conclusions and open problems}\label{chpt:future}

In this work, we study two classes of graphs: \fpExt{k} and \fppExt{k} graphs, subclasses of $k$-thin and proper $k$-thin graphs, respectively. 
Concerning \fpExt{k} graphs, we present a polynomial time algorithm that receives as input a graph $G$ and a $k$-partition of $V(G)$ and decides whether $G$ is a precedence $k$-thin graph with respect to the given partition. This result is presented in Section~\ref{chpt:pthinness}. Regarding \fppExt{k} graphs, for the same input, we prove that if $k$ is a fixed value, then it is possible to decide whether $G$ is a precedence proper $k$-thin graph with respect to the given partition in polynomial time. For variable $k$, the related recognition problem is \NP-complete. These results are presented in Section~\ref{chpt:ppthinness}. Also, using threshold graphs, we characterize both \fpExt{k} and \fppExt{k} graphs.  

Concerning the classes defined in this paper, some open questions are highlighted:

\begin{itemize}
    
    \item Given a graph $G$, what is the complexity of evaluating  \fpPar{G} and \fppPar{G}?
    
    \item Given a graph $G$ and an integer $k$, what is the complexity of determining  if \fpPar{G}, or \fppPar{G}, is at most $k$?
    
    \item How do \fpPar{G} and \fppPar{G} relate to $thin(G)$ and $pthin(G)$, respectively? 
    
    \item Is it possible to extend the results of this paper to consider other types of orderings (partial orders) and restrictions? 

\end{itemize}
 

\section*{Acknowledgments}
This work was partially supported by FAPERJ and CNPq, Brazil, and UBACyT, Argentina.
The third author is supported by CAPES, Brazil. We would like to thank V\'{\i}ctor Braberman for some ideas for the reduction in the \NP-completeness proof.


\end{document}